\begin{document}

\title{String Rearrangement Inequalities
  and \\a Total Order Between Primitive Words
  \thanks{Corresponding author: Kai Jin~(\email{cscjjk@gmail.com}).\quad Supported by National Natural Science Foundation of China 62002394.}
}

\titlerunning{Rearrangement Inequalities for String Concatenation}

\author{Ruixi Luo\inst{1}\orcidID{0000-0003-0483-0119} \and \\
Taikun Zhu\inst{1}\orcidID{0000-0001-7365-9576} \and \\
Kai Jin\inst{1}\orcidID{0000-0003-3720-5117}}

\authorrunning{R. Luo et al.} 

\institute{School of Intelligent Systems Engineering, Sun Yat-Sen University, Shenzhen, China
\email{{luorx,zhutk3}@mail2.sysu.edu.cn},\email{cscjjk@gmail.com}}

\maketitle
\begin{abstract} 
We study the following rearrangement problem: Given $n$ words, rearrange and concatenate them so that the obtained string is lexicographically smallest (or largest, respectively). We show that this problem reduces to sorting the given words so that their repeating strings are non-decreasing (or non-increasing, respectively), where the repeating string of a word $A$ refers to the infinite string $AAA\ldots$. Moreover, for fixed size alphabet $\Sigma$, we design an $O(L)$ time sorting algorithm of the words (in the mentioned orders), where $L$ denotes the total length of the input words. Hence we obtain an $O(L)$ time algorithm for the rearrangement problem. Finally, we point out that comparing primitive words via comparing their repeating strings leads to a total order, which can further be extended to a total order on the finite words (or all words).

\keywords{String rearrangement inequalities \and Primitive words \and Combinatorics on words \and String ordering \and Greedy algorithm.}
\end{abstract}

\section{Introduction}

Combinatorics on words (MSC: 68R15) have strong connections to many fields of mathematics and have found significant applications
  to theoretical computer science and molecular biology (DNA sequences) \cite{KMP,book-bible,pattern,Run-2,Run-3,LW-1}.
Particularly, the primitive words over some alphabet $\Sigma$ have received special interest,
   as they have applications in the formal languages and algebraic theory of codes \cite{primitive-contextfree-96,primitive-contextfree-91,primitive-contextfree-14,primitive-roots}.
A word is \emph{primitive} if it is not a proper power of a shorter word.

In this paper, we consider the following rearrangement problem of words:
Given $n$ words $A_1,\ldots,A_n$, rearrange and concatenate these words so that
  the obtained string $S$ is lexicographically smallest (or largest, respectively).
We prove that the lexicographical smallest outcome of $S$ happens
   when the words are arranged so that their repeating strings are increasing,
and the largest outcome of $S$ happens when the words are arranged reversely; see Lemma~\ref{lemma:string-rearrangement-inequalities}.
Throughout, the \emph{repeating string} of a word $A$ refers to the infinite string $R(A)=AAA\ldots$.

Based on the above lemma (we suggest to name its results as ``string rearrangement inequalities''),
  the aforementioned rearrangement problem reduces to sorting the words $A_1,\ldots,A_n$ so that
   $R(A_1)\leq \ldots \leq R(A_n)$.
We show how to sort for the special case where $A_1,\ldots,A_n$ are primitive and distinct in $O(\sum_i |A_i|)$ time.
The general case can be easily reduced to the special case and can be solved in the same time bound.
Note that we assume bounded alphabet $\Sigma$ and the size of $\Sigma$ is fixed.
Moreover, $|X|$ always denotes the length of word $X$.

Our algorithm beats the plain algorithm based on sorting (via comparing several pairs $R(A_i),R(A_j)$) by a factor of $\log n$.
The algorithm is simple -- it only applies basic data structures such as tries and the failure function \cite{KMP}. Nevertheless, its  correctness and running time analysis is non-straightforward.

\newcommand{\Oinf}{\leq_\infty}

\medskip We mention that comparing primitive words via comparing their repeating strings leads to a total order $\Oinf$ on primitive words, which can extended to a total order $\Oinf$ on all words (section~\ref{sect:order}).
We show that this order is the same as the lexicographical order over Lyndon words
  but are different over primitive words and finite words.
It is also different from \emph{reflected lexicographic order}, \emph{co-lexicographic order}, \emph{shortlex order},
  \emph{Kleene-Brouwer order}, \emph{V-Order}, \emph{alternative order} \cite{order-1,order-2,order-3,order-4,order-5}.
It seems that order $\Oinf$ has not been reported in literature.
		
\subsection{Related work}

It is shown in \cite{primitive-contextfree-97} that
  the language of Lyndon words is not context-free.
Also, many people conjectured that the language of primitive words is not context-free \cite{primitive-contextfree-91,primitive-contextfree-96,primitive-contextfree-97,primitive-contextfree-14}.
But this conjecture is unsettled thus far, to the best of our knowledge.
It would be interesting to explore whether the results shown in this paper can be helpful for solving this longstanding open problem in the future. See more introductions about primitive words in \cite{primitive-roots}.

\medskip Fredricksen and Maiorana \cite{LW-4-debrujin-original} showed that
  if one concatenates, in lexicographic order, all the Lyndon words that have length dividing a given number $n$,
    the result is a de Bruijn sequence.
    Au \cite{LW-3-debrujin-primitive} further showed that if ``dividing $n$'' is replaced by ``identical to $n$'',
     the result is a sequence which contains exactly once every primitive word of length $n$ as a factor.
Note that concatenating some Lyndon words by lexicographic order is the same as concatenating by $\Oinf$ order.

\medskip The Lyndon words have many interesting properties and have found plentiful applications, both theoretically and practically.
Among others, they are used in constructing de Brujin sequence as mentioned above (which have found applications in cryptography), and they are applied in proving the ``runs theorem'' \cite{Run-1,Run-2,Run-3}.
The famous Chen-Fox-Lyndon Theorem states that any word $W$ can be uniquely factorized into $W = W_1 W_2 \ldots W_m$,
  such that each $W_i$ is a Lyndon word, and $W_1 \geq \ldots \geq W_m$ \cite{LW-1,LW-2} (Here $\geq$ refers to the opposite of Lexicographical order, but is the same as the opposite of $\Oinf$).
This factorization is used in the computation of runs in a word \cite{Run-3}.
See the Bible of combinatorics on words \cite{book-bible} for more introductions about Lyndon words and primitive words.


\section{Preliminaries}\label{sect:pre}

\begin{definition}\label{def:primitive}
\emph{The $n^{th}$ \emph{\bf power} of word $A$ is defined as:}
$$
	A^n = \left\{\begin{array}{ccc}
		AA^{n-1},& n>0; \\
		\mbox{\emph{empty word}}, & n=0.\\
	\end{array}\right.
$$
A word $A$ is \emph{\bf non-primitive} if it equals $B^k$ for some word $B$ and integer $k\geq 2$.
    Otherwise, $A$ is \emph{\bf primitive}. (By this definition the empty word is not primitive.)
\end{definition}

The next lemma summarizes three results about the powers proved by Lyndon and Sch\"{u}zenberger \cite{equation-1};
 see their Lemmas~3 and 4, and Corollary~4.1.
(More introductions of these results can be found in Section 1.3 ``Conjugacy'' of \cite{book-bible}.)

\begin{lemma}\label{lemma:key-power-lemma}\cite{equation-1}
Given words $A$ and $B$, there exist $C,k,l$ such that $A=C^k$ and $B=C^l$ when one of the following conditions holds:
\begin{enumerate}
\item $AB=BA$.
\item Two powers $A^{m_1}$ and $B^{m_2}$ have a common prefix of length $|A|+|B|$.
\item $A^{m_1}=B^{m_2}$.
\end{enumerate}
\end{lemma}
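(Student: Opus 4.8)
The plan is to designate item~(2) as the core statement and to reduce items~(1) and~(3) to it, since all three assertions encode the same periodicity phenomenon. Writing $a=|A|$, $b=|B|$, $P[i]$ for the $i$th letter of a word $P$ and $P[i..j]$ for its factor, I say $P$ \emph{has period} $p$ if $P[i]=P[i+p]$ whenever $1\le i$ and $i+p\le|P|$. The key remark is that any prefix of a power $A^{m}$ has period $a$, and conversely a word of period $p$ is a prefix of a power of its length-$p$ prefix. Thus item~(2) is exactly the claim that a word which simultaneously has periods $a$ and $b$ and has length at least $a+b$ must be generated by a single shorter block -- this is the content of the Fine--Wilf periodicity theorem, and proving it is where the real work lies. (A self-contained Euclidean-style induction also works, as sketched next.)

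To prove item~(2), let $P$ be the common prefix of $A^{m_1}$ and $B^{m_2}$ with $|P|\ge a+b$, and induct on $a+b$; assume without loss of generality $a\le b$. Since $a\le b\le|P|$, both $A$ and $B$ are prefixes of $P$, so $A$ is a prefix of $B$; write $B=AD$ with $|D|=b-a$. If $D$ is empty then $A=B$ and $C=A$ works, so suppose $D$ is nonempty, and consider the shifted factor $P_1:=P[a+1\,..\,|P|]$, of length $|P|-a\ge b=a+|D|$. Using that $P$ has period $a$ (from $A^{m_1}$) one checks that $P_1$ is again a prefix of $P$, hence of $A^{m_1}$; using that $P$ has period $b$ (from $B^{m_2}=(AD)^{m_2}$) together with period $a$, one checks that $P_1$ has period $|D|$ and that its length-$|D|$ prefix is exactly $D$, so $P_1$ is a prefix of a power of $D$. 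Now $P_1$ is a common prefix of a power of $A$ and a power of $D$ of length $\ge a+|D|$, and $a+|D|=b<a+b$; the induction hypothesis applied to the strictly smaller pair $(A,D)$ yields $A=C^{k}$ and $D=C^{k'}$ for a common word $C$, whence $B=AD=C^{k+k'}$, completing the induction.

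It remains to reduce items~(1) and~(3) to item~(2), which is routine. For~(3), if $A^{m_1}=B^{m_2}$ then $A^{t m_1}=B^{t m_2}$ for every $t\ge1$, and choosing $t$ with $t m_1 a\ge a+b$ produces two powers of $A$ and $B$ that are \emph{equal} (hence share a common prefix) of length $\ge a+b$, so item~(2) applies. For~(1), from $AB=BA$ a straightforward induction gives $(AB)^{n}=A^{n}B^{n}=B^{n}A^{n}$ for all $n$; the single word $W:=A^{n}B^{n}=B^{n}A^{n}$ therefore has $A^{n}$ as a prefix and also $B^{n}$ as a prefix, so $A^{n}$ and $B^{n}$ agree on their first $\min(na,nb)$ symbols, which exceeds $a+b$ once $n$ is large, and item~(2) applies again. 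The main obstacle is thus entirely contained in item~(2): the delicate point is the bookkeeping in the inductive step showing that the period-$|D|$ block of $P_1$ is precisely $D$ and not merely some word of the same length, since this is exactly what allows the induction to descend from the pair $(A,B)$ to the strictly smaller pair $(A,D)$.
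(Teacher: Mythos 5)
The paper never proves this lemma at all: it is imported verbatim from Lyndon and Sch\"{u}tzenberger (their Lemmas~3 and~4 and Corollary~4.1), so there is no internal proof to compare against, and your blind proposal is in effect supplying a proof the paper outsources. Your proof is correct and self-contained, which is a genuinely different route from the paper's citation. You designate item~(2) -- essentially the weak Fine--Wilf property, that a word of length at least $|A|+|B|$ with periods $|A|$ and $|B|$ is built from a single common block -- as the core, prove it by Euclidean descent from the pair $(A,B)$ to $(A,D)$ where $B=AD$, and reduce (1) and (3) to (2) by boosting exponents ($(AB)^n=A^nB^n=B^nA^n$ under commutation, and $A^{tm_1}=B^{tm_2}$ in case (3)). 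I checked the descent and it is sound: $P_1=P[a+1\,..\,|P|]$ is a prefix of $P$ because $P$ has period $a$; it has period $d=|D|$ because $a+d=b$ and the validity ranges of the two period conditions coincide; and its length-$d$ prefix is $P[a+1\,..\,b]=D$ itself, which is exactly what lets the induction hypothesis apply to the strictly smaller pair $(A,D)$. Two trivial caveats you should make explicit: the words must be nonempty (if $a=0$ the measure $a+d$ does not decrease, but that case is trivial since $A=C^0$ with $C=B$), and in item (3) the exponents $m_1,m_2$ must be positive -- otherwise the statement itself fails (e.g.\ $A^0=B^0$ for arbitrary $A,B$), and positivity is also what guarantees your choice of $t$ exists. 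What your route buys is a single reusable periodicity argument and independence from the reference; what the citation buys the paper is brevity and the stronger classical constants (Fine--Wilf needs only length $|A|+|B|-\gcd(|A|,|B|)$, though the weaker bound suffices everywhere the paper uses the lemma).
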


\newcommand{\rt}{\mathsf{root}}

\begin{definition}\label{def:root}
The \emph{\bf root} of a word $A$, denoted by $\rt(A)$, is the unique primitive word $B$ such that $A$ is a power of $B$.
The uniqueness of root is obvious, a formal proof can be found in Corollary 4.2 of \cite{equation-1} or in \cite{primitive-roots}.
\end{definition}

\begin{lemma}\label{lemma:root}
Assume $A$ is a non-empty word.
Find the largest $j<|A|$ such that the prefix of $A$ with length $j$ equals the suffix of $A$ with length $j$.
Let $k=|A|-j>0$. Then,
\begin{equation}
|\rt(A)| = \left\{
               \begin{array}{clc}
                 k, & ~ &\hbox{$|A| = 0\pmod k$;} \\
                 |A|, & ~ & \hbox{$|A|\neq 0\pmod k$.}
               \end{array}
             \right.
\end{equation}
\end{lemma}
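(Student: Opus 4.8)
The plan is to translate the statement into the language of \emph{periods} and \emph{borders} and then treat the two cases separately. Recall that a length-$b$ \emph{border} of $A$ is a word that is simultaneously a proper prefix and a proper suffix of $A$, and that $A$ has \emph{period} $p$ (with $1\le p\le |A|$) when $A[i]=A[i+p]$ for every $i$ with $1\le i\le |A|-p$. The quantity $j$ in the statement is precisely the length of the longest border of $A$, and the elementary correspondence ``$A$ has a border of length $b$ iff $A$ has period $|A|-b$'' shows that $k=|A|-j$ is the \emph{smallest} period of $A$. Writing $P$ for the length-$k$ prefix of $A$, the period condition says exactly that $A$ is a prefix of the repeating string $R(P)=PPP\ldots$. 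First I would record that $P$ is primitive: if $P$ were a proper power $C^{s}$ with $s\ge 2$, then $A$ would be a prefix of $R(C)$ and hence have period $|C|=k/s<k$, contradicting the minimality of $k$.

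With these observations the case $|A|\equiv 0\pmod{k}$ is immediate: since $A$ is a prefix of $R(P)$ whose length is a multiple of $k=|P|$, we get $A=P^{\,|A|/k}$, and as $P$ is primitive this exhibits $P$ as $\rt(A)$, so $|\rt(A)|=k$.

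The substantive case, and what I expect to be the main obstacle, is $|A|\not\equiv 0\pmod{k}$, where I must show that $A$ is primitive (so $|\rt(A)|=|A|$). I would argue by contradiction: suppose $A=Q^{t}$ with $t\ge 2$ and $Q=\rt(A)$ primitive of length $r$. Then $r$ is a period of $A$, so minimality gives $k\le r$, and since $t\ge 2$ we have $|A|\ge 2r\ge r+k=|Q|+|P|$. But $A$ is a common prefix of both $R(P)$ and $R(Q)$, so suitable powers $P^{m_1}$ and $Q^{m_2}$ share a prefix of length at least $|P|+|Q|$; by Lemma~\ref{lemma:key-power-lemma}(2) there is a word $C$ with $P=C^{a}$ and $Q=C^{b}$. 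Primitivity of $P$ forces $a=1$ and $C=P$, whence $Q=P^{b}$, and primitivity of $Q$ forces $b=1$; thus $Q=P$ and $r=k$. Then $k=r$ divides $|A|$, contradicting the case assumption, so $A$ must be primitive. The crux is precisely this reconciliation of the two periodicities (the smallest period $k$ and the root length $r$), which is exactly where Lemma~\ref{lemma:key-power-lemma}(2), on powers sharing a long common prefix, does the work; the remaining steps are routine.
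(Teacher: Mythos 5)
Your proof is correct, but it takes a genuinely different route from the paper's. The paper works directly with the root and sandwiches $|\rt(A)|$ between two bounds: its Claim~1 shows $|\rt(A)|\geq k$ (removing one copy of $\rt(A)$ from either end of $A$ exhibits a border of length $|A|-|\rt(A)|$), and its Claim~2 shows that if $A$ is non-primitive then $k\geq |\rt(A)|$. Claim~2 is proved by a conjugacy argument: writing $S=\rt(A)=BB'$ with $|B|=k$, the long border forces $S=BB'=B'B$, and then Lemma~\ref{lemma:key-power-lemma} (condition~1) makes $S$ non-primitive, a contradiction. You instead recast everything in period language: $k$ is the smallest period, the length-$k$ prefix $P$ is primitive, and in the case $|A|\not\equiv 0\pmod k$ you reconcile the two periods $k$ and $|\rt(A)|$ by applying Lemma~\ref{lemma:key-power-lemma} (condition~2), the Fine--Wilf-type clause about powers sharing a prefix of length $|P|+|Q|$. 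So both arguments ultimately rest on the same Lyndon--Sch\"{u}tzenberger lemma, but through different clauses: the paper through commutation/conjugacy, you through the common-prefix condition. Your version buys a cleaner conceptual framing --- the period--border duality does the bookkeeping, and the standalone observation that the prefix whose length is the smallest period is primitive is reusable elsewhere --- while the paper's argument is more self-contained, manipulating only borders and the weakest clause of the key lemma. Both are complete and correct.
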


\begin{proof} This result should be well-known. A simple proof is as follows.

\smallskip \noindent \emph{Fact~1}. If $S=BB'=B'B$ and $B,B'$ are non-empty, $S$ is non-primitive.

This is a trivial fact and is implied by Lemma~\ref{lemma:key-power-lemma}~(condition~1); proof omitted.

\smallskip \noindent \emph{Claim~1}. $|\rt(A)|\geq k$.

Proof: The prefix and suffix of $A$ with length $|A|-|\rt(A)|$ are the same, which
  implies that $j\geq |A|-|\rt(A)|$. Consequently, $|\rt(A)|\geq |A|-j=k$.

\smallskip \noindent \emph{Claim~2}. If $|\rt(A)|<|A|$ (i.e., $A$ is non-primitive), then $k\geq |\rt(A)|$.

Proof: Denote $S=\rt(A)$ and assume $|S|<|A|$. Therefore, $A=S^d~(d\geq 2)$.
Suppose to the opposite that $k<|\rt(A)|$.
Let $B$ be the prefix of $S$ with length $k$, and $B'$ be the suffix of $S$ such that $S=BB'$.
As $k<|S|$, we have $j>|A|-|S|\geq |S|$.
Further since the suffix of $A$ with length $j$ (which starts with $B'B$) equals to the prefix of $A$ with length $j$ (which starts with $S=BB'$), we get $S=B'B$. Applying Fact~1, $\rt(A)=S$ is non-primitive. Contradictory.

We are ready to prove the lemma. When $|A|$ is a multiple of $k$,
  $A$ is a power of its prefix of length $k$, which means $|\rt(A)|\leq k$. Further by Claim~1, $|\rt(A)|=k$.
Next, assume $|A|$ is not a multiple of $k$.
Since $|A|$ is a multiple of $|\rt(A)|$, we see $|\rt(A)|\neq k$. Further by Claims~1 and 2, it follows that $|\rt(A)|=|A|$.
\qed
\end{proof}

For a non-empty word $A$, denote by $R(A)$ the infinite repeating string $AA\ldots$.

\begin{quote}
\textbf{Problem~1.}
Given non-empty words $A_1,\ldots,A_n$, sort them so that
    $$R(A_1)\leq \ldots \leq R(A_n).$$
\end{quote}

Clearly, $R(A)=R(\rt(A))$. To solve Problem~1, we can replace $A$ by $\rt(A)$ (using a preprocessing algorithm based on Lemma~\ref{lemma:root}), and then it reduces to:

\begin{quote}
\textbf{Problem~1'.}
Given primitive words $A_1,\ldots,A_n$, sort them so that
    $$R(A_1)\leq \ldots \leq R(A_n).$$
\end{quote}

\newcommand{\SA}{\mathcal{A}}

\begin{definition}\label{def:deg}
For any two non-empty words $S$ and $A$,
 denote by $\deg_A(S)$ the largest integer $d$ so that $S^d$ is a prefix of $A$.
 Moreover, for non-empty word $S$ and set of non-empty words $\SA=\{A_1,\ldots,A_n\}$, denote $\deg_\SA(S)= \max_j \deg_{A_j}(S)$.

In other words, if we build the trie $T$ of $\SA$, $S^{\deg_\SA(S)}$ is the longest power of $S$ that equals to some path of the trie $T$ starting from its root.

For any $i~(1\leq i\leq n)$, denote
\begin{eqnarray}
N_i &= &\hbox{the $\deg_\SA(A_i)$-th power of $A_i$}\\
M_i = N_iA_i^2 &=& \hbox{the ($\deg_\SA(A_i)$+2)-th power of $A_i$}
\end{eqnarray}

\end{definition}

The following lemma is fundamental to our algorithm.

\begin{lemma}\label{lemma:R(A)R(B)-AB-BA}
For non-empty words $A$ and $B$,
the relation between $R(A)$ and $R(B)$ is the same as the relation between $AB$ and $BA$. In other words,
 \begin{eqnarray}
    R(A)< R(B) & \quad \Leftrightarrow & \quad AB< BA,\\
    R(A)> R(B) & \quad \Leftrightarrow & \quad AB> BA.\label{eqn:>}\\
    R(A)= R(B) & \quad \Leftrightarrow & \quad AB= BA\label{eqn:=},
 \end{eqnarray}
\end{lemma}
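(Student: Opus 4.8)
The plan is to reduce the comparison of the two infinite periodic strings $R(A)$ and $R(B)$ to their longest common prefix and then transfer this to the finite words $AB$ and $BA$, which have equal length $|A|+|B|$. I would split into the case $R(A)=R(B)$ and the case $R(A)\neq R(B)$: the equality case will give $R(A)=R(B)\Leftrightarrow AB=BA$ directly, while the strict-case analysis will pin down the sign of $AB$ versus $BA$ exactly, yielding the two remaining equivalences simultaneously. Throughout I index the infinite strings from position $0$.

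For the equality case, if $AB=BA$ then condition~1 of Lemma~\ref{lemma:key-power-lemma} yields $A=C^k$ and $B=C^l$ for a common word $C$, whence $R(A)=R(C)=R(B)$. Conversely, if $R(A)=R(B)$, I would argue that both $AB$ and $BA$ are the length-$(|A|+|B|)$ prefix of this common infinite string: the first $|A|$ symbols of $AB$ form $A$, a prefix of $R(A)$, and the next $|B|$ symbols form $B$, a prefix of $R(B)=R(A)$, which agrees with $R(A)$ on that block because $|A|$ is a period of $R(A)$; the same holds for $BA$. Hence $AB=BA$.

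The heart of the argument is the strict case. Assume $R(A)\neq R(B)$ and let $d$ be the length of their longest common prefix. First I would show $d<|A|+|B|$: otherwise $R(A)$ and $R(B)$ share a prefix of length $|A|+|B|$, so suitably high powers $A^{m_1}$ and $B^{m_2}$ share a prefix of length $|A|+|B|$, and condition~2 of Lemma~\ref{lemma:key-power-lemma} forces a common root, contradicting $R(A)\neq R(B)$. Since $d<|A|+|B|$, position $d$ lies inside both $AB$ and $BA$, and the sign of $R(A)$ versus $R(B)$ equals the sign of $R(A)[d]$ versus $R(B)[d]$. It then remains to prove the key claim that $AB$ and $BA$ first differ precisely at position $d$, with $AB[d]=R(A)[d]$ and $BA[d]=R(B)[d]$, which immediately gives $\mathrm{sign}(AB-BA)=\mathrm{sign}(R(A)-R(B))$.

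I expect this key claim to be the main obstacle, because for positions beyond $\min(|A|,|B|)$ the symbols of $AB$ (resp.\ $BA$) are drawn from $B$ (resp.\ $A$), whereas those of $R(A)$ (resp.\ $R(B)$) come from $A$ (resp.\ $B$), so equality is not apparent. I would resolve this by a periodicity reduction: for $|A|\le p\le d$ one has $AB[p]=B[p-|A|]$, and since $|A|$ is a period of $R(A)$ also $R(A)[p]=R(A)[p-|A|]$; writing $q=p-|A|$ we have $0\le q\le d-|A|<d$, so $B[q]=R(B)[q]$ (as $q<|B|$) and $R(B)[q]=R(A)[q]$ (the common-prefix property at index $q<d$), giving $AB[p]=R(A)[q]=R(A)[p]$. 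For $p<|A|$ this is trivial, so $AB[p]=R(A)[p]$ for all $0\le p\le d$, and symmetrically $BA[p]=R(B)[p]$. Consequently $AB$ and $BA$ agree on positions $0,\dots,d-1$ and split at $d$ with the correct sign, which completes the strict case and, together with the equality case, the whole lemma.
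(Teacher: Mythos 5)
Your proof is correct, but it takes a genuinely different route from both proofs in the paper. The paper's main proof is an arithmetic trick: it interprets $A$ and $B$ as decimal numbers $\alpha,\beta$, reduces $AB<BA$ to $\frac{\alpha}{10^a-1}<\frac{\beta}{10^b-1}$, and recognizes these fractions as the repeating decimals $0.\dot\alpha=R(A)$ and $0.\dot\beta=R(B)$; the appendix proof is combinatorial but proceeds by the decomposition $B=A^mS$ with $m=\deg_B(A)$, a ``truly less'' relation $<_T$, and a two-level case analysis (on $|A|$ vs.\ $|B|$, then on whether $S$ is a prefix of $A$). You instead organize everything around the longest common prefix $d$ of the two infinite strings: Lemma~\ref{lemma:key-power-lemma} (condition~2) bounds $d<|A|+|B|$ when $R(A)\neq R(B)$, and then the periodicity argument ($|A|$ is a period of $R(A)$, $q=p-|A|<d$ falls back into the common prefix) shows $AB$ and $BA$ coincide with $R(A)$ and $R(B)$ respectively through position $d$, so the first disagreement of $AB$ and $BA$ occurs exactly at $d$ with the right sign; the equality case follows from condition~1 in one direction and the same periodicity observation in the other. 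What your approach buys: it is uniform (no case split on $|A|\le|B|$ or on whether a remainder is a prefix), it works verbatim over an arbitrary ordered alphabet, whereas the paper's decimal proof is stated for digit strings and only asserts that it ``can be easily extended,'' and it makes explicit the one nontrivial ingredient (the Fine--Wilf-type condition~2 of Lemma~\ref{lemma:key-power-lemma}) that the slick decimal proof hides. What the paper's main proof buys in exchange is brevity and the connection to repeating decimals that the authors highlight in their conclusion.
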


\begin{proof}
Assume that $A,B$ are words that consist of the decimal symbols `0',\ldots,`9'.
The proof can be easily extended to the more general case.

Let $\alpha,\beta$ denote the number represented by strings $A,B$. For example, string `89' represents number 89.
Denote $a=|A|$ and $b=|B|$. Observe that
$$ AB<BA ~\Leftrightarrow~ \alpha \cdot 10^b + \beta < \beta \cdot 10^a + \alpha
 ~\Leftrightarrow~  \frac{\alpha}{10^a-1} < \frac{\beta}{10^b-1}.$$ Moreover,
\[  \frac{\alpha}{10^a-1}=\alpha\frac{\frac{1}{10^a}}{1-\frac{1}{10^a}}=\alpha [ \frac{1}{10^a} + (\frac{1}{10^a})^2+ (\frac{1}{10^a})^3 + \ldots ]=0.\alpha \alpha \alpha \cdots=0.\dot{\alpha };\]
\[  \frac{\beta}{10^b-1}=\beta\frac{\frac{1}{10^b}}{1-\frac{1}{10^b}}=\beta [ \frac{1}{10^b} + (\frac{1}{10^b})^2+ (\frac{1}{10^b})^3 + \ldots ]=0.\beta\beta\beta\cdots=0.\dot{\beta}.\]

So, $AB<BA \Leftrightarrow 0.\dot{\alpha } < 0.\dot{\beta} \Leftrightarrow R(A) < R(B).$
Similarly, (\ref{eqn:>}) and (\ref{eqn:=}) hold. \qed
\end{proof}

A more rigorous but complicated proof of Lemma~\ref{lemma:R(A)R(B)-AB-BA} is given in the appendix.

As an interesting corollary of Lemma~\ref{lemma:R(A)R(B)-AB-BA},
  we obtain that ``if $AB\leq BA$ and $BC\leq CB$, then $AC\leq CA$''.
    This transitivity is not obvious without Lemma~\ref{lemma:R(A)R(B)-AB-BA}.

\clearpage
\section{A linear time algorithm for sorting the repeating words}

Assume that $A_1,\ldots,A_n$ are \textbf{primitive}. Denote $L=\sum_i|A_i|$ for short.
This section presents an $O(L)$ time algorithm for solving Problem 1', that is, sorting $R(A_1),\ldots,R(A_n)$.
We start with two nontrivial observations.

\begin{lemma}\label{lemma:relation-M}
The relation between infinitely repeating strings $R(A_i)$ and $R(A_j)$ is the same as the relation between words $M_i$ and $M_j$,
 that is,
 \begin{eqnarray*}
    R(A_i)= R(A_j) & \quad \Leftrightarrow & \quad M_i= M_j,\\
    R(A_i)< R(A_j) & \quad \Leftrightarrow & \quad M_i< M_j,\\
    R(A_i)> R(A_j) & \quad \Leftrightarrow & \quad M_i> M_j.
 \end{eqnarray*}
As a corollary, sorting $R(A_1),\ldots,R(A_n)$ reduces to sorting $M_1,\ldots,M_n$.
\end{lemma}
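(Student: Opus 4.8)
The plan is to reduce the three stated equivalences to the three forward implications (the reverse directions then come for free from trichotomy on each side), and then, for the case $A_i \neq A_j$, to prove that the first position at which $R(A_i)$ and $R(A_j)$ disagree already lies strictly inside both finite prefixes $M_i$ and $M_j$. Once this is shown, the lexicographic comparison of the finite words $M_i$ and $M_j$ is forced to agree in sign with the comparison of $R(A_i)$ and $R(A_j)$. I would first dispose of equality: since $A_i,A_j$ are primitive, Lemma~\ref{lemma:R(A)R(B)-AB-BA} together with condition~1 of Lemma~\ref{lemma:key-power-lemma} shows $R(A_i)=R(A_j)$ iff $A_iA_j=A_jA_i$ iff $A_i=C^k,\ A_j=C^l$ for a common $C$; primitivity forces $k=l=1$, hence $A_i=A_j$. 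When $A_i=A_j$ clearly $\deg_\SA(A_i)=\deg_\SA(A_j)$ and thus $M_i=M_j$, giving $R(A_i)=R(A_j)\Rightarrow M_i=M_j$.

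Now suppose $A_i\neq A_j$, and let $p$ be the length of the longest common prefix of $R(A_i)$ and $R(A_j)$, so that the sign of the comparison is decided at position $p$ (counting from $0$). The first bound I would establish is $p<|A_i|+|A_j|$: if the two repeating strings agreed on a prefix of length $|A_i|+|A_j|$, then applying condition~2 of Lemma~\ref{lemma:key-power-lemma} to suitably long powers $A_i^{m_1}$ and $A_j^{m_2}$ would force $A_i$ and $A_j$ to be powers of a common word, hence equal by primitivity, contradicting $A_i\neq A_j$.

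The crux is to strengthen this to $p<|M_i|=(\deg_\SA(A_i)+2)|A_i|$ (and, symmetrically, $p<|M_j|$). Write $d_i=\deg_\SA(A_i)$ and argue by contradiction, assuming $p\geq(d_i+2)|A_i|$. Then the common prefix begins with $A_i^{d_i+2}$ (a length-$(d_i+2)|A_i|$ prefix of $R(A_i)$), so $R(A_j)$ begins with $A_i^{d_i+2}$. If $|A_j|\geq(d_i+2)|A_i|$, then $A_i^{d_i+2}$ is a prefix of $A_j$ itself and $\deg_{A_j}(A_i)\geq d_i+2>d_i\geq\deg_{A_j}(A_i)$, a contradiction. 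Otherwise $A_j$ is a proper prefix of $A_i^{d_i+2}$, hence of $R(A_i)$; writing $|A_j|=q|A_i|+|w|$ with $0\leq|w|<|A_i|$, primitivity of $A_j$ excludes $|w|=0$ (else $A_j=A_i^q$ with $q\geq2$ is non-primitive), so $\deg_{A_j}(A_i)=q$ and therefore $q\leq d_i$. Then $p<|A_i|+|A_j|=(q+1)|A_i|+|w|<(q+2)|A_i|\leq(d_i+2)|A_i|=|M_i|$, contradicting the assumption. Hence $p<|M_i|$, and swapping the roles of $i$ and $j$ gives $p<|M_j|$.

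Finally, because $M_i$ and $M_j$ are prefixes of $R(A_i)$ and $R(A_j)$ of lengths exceeding $p$, they agree with the respective repeating strings on positions $0,\ldots,p$; in particular $M_i$ and $M_j$ coincide on positions $0,\ldots,p-1$ and differ at position $p$ with the same character inequality as $R(A_i)$ and $R(A_j)$. This yields $R(A_i)<R(A_j)\Rightarrow M_i<M_j$ and $R(A_i)>R(A_j)\Rightarrow M_i>M_j$; combined with the equality case and trichotomy on both sides, all three equivalences follow, and the corollary that sorting $R(A_1),\ldots,R(A_n)$ reduces to sorting $M_1,\ldots,M_n$ is immediate. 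I expect the main obstacle to be the second sub-case of the crux, where $A_j$ is shorter than $A_i^{d_i+2}$: there one must combine the very definition of $\deg_\SA$ with the primitivity of $A_j$ to turn a plain length count into the required inequality $p<|M_i|$.
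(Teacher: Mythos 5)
Your proof is correct, but it takes a genuinely different route from the paper's. The paper reduces the comparison of $R(A_i)$ and $R(A_j)$ to that of $A_iA_j$ versus $A_jA_i$ via Lemma~\ref{lemma:R(A)R(B)-AB-BA}, writes $A_j=A_i^pS$ with $p=\deg_{A_j}(A_i)$, and in two subcases (according to whether $S$ is a prefix of $A_i$) exhibits explicit finite words ($A_i^{p+1}$ versus $A_j$, or $A_i^pSTS$ versus $A_i^pSST$ where $A_i=ST$, together with a side argument that $ST\neq TS$) that are simultaneously prefixes of $M_i,M_j$ and decide both comparisons. You instead work directly with the first disagreement position of the two infinite strings: you bound it by $|A_i|+|A_j|$ using condition~2 of Lemma~\ref{lemma:key-power-lemma} (a Fine--Wilf style argument the paper reserves for Lemma~\ref{lemma:sum-to-L}), and then strengthen the bound to $\min(|M_i|,|M_j|)$ by a counting argument keyed to the definition of $\deg_\SA$; Lemma~\ref{lemma:R(A)R(B)-AB-BA} enters only in your equality case, and even there it could be bypassed (the paper uses condition~3 directly on $A_i^{|A_j|}=A_j^{|A_i|}$). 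Your approach buys a cleaner quantitative statement --- the deciding position lies strictly inside both $M_i$ and $M_j$, which makes the role of the ``$+2$'' in the definition of $M_i$ transparent --- whereas the paper's approach stays within finite-word identities and reuses the $A_j=A_i^pS$ decomposition and the $STS$/$SST$ trick that also drive its appendix proof of Lemma~\ref{lemma:R(A)R(B)-AB-BA}, so it is more integrated with the rest of the paper's machinery. One cosmetic elision on your side: when $|w|=0$ in your case analysis you rule out $A_j=A_i^q$ by primitivity for $q\geq 2$, but you should also note that $q=1$ is excluded by $A_i\neq A_j$ and $q=0$ by non-emptiness; this is trivial to fill and does not affect correctness.
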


\begin{proof}
Consider the comparison of $R(A_i)$ and $R(A_j)$.
Assume $|A_i|\leq |A_j|$. Otherwise it is symmetric.

First, consider the case $R(A_i)=R(A_j)$.
Let $m_1=|A_j|$ and $m_2=|A_i|$.
We know $A_i^{m_1}=A_j^{m_2}$ because $R(A_i)=R(A_j)$.
Applying Lemma~\ref{lemma:key-power-lemma} (condition~3), $A_i=C^k$ and $A_j=C^l$ for some $C,k,l$.
Further since $A_i,A_j$ are primitive, $A_i=C=A_j$. It follows that $M_i=M_j$.
Next, assume that $R(A_i)\neq R(A_j)$.

\medskip Let $p=\deg_{A_j}(A_i)$. Thus, $A_j=A_i^pS$, where $p\geq 0$ and $A_i$ is not a prefix of $S$.
Be aware that $p\leq \deg_{\SA}(A_i)$ by the definition of $\deg_{\SA}(A_i)$.

According to Lemma~\ref{lemma:R(A)R(B)-AB-BA}, the comparison of $R(A_i)$ and $R(A_j)$
  equals to the comparison of $A_iA_j$ and $A_jA_i$.
  Further since $A_j=A_i^pS$,
    it equals to the comparison of $A_i^p A_iS$ and $A_i^p SA_i$.
    In the following, we discuss two subcases.

\medskip \noindent \emph{Subcase 1.} \emph{$|S|>|A_i|$, or $|S|\leq |A_i|$ and $S$ is not a prefix of $A_i$.}

Recall that $A_i$ is not a prefix of $S$.
In this subcase, we will find an unequal letter if we compare $A_i$ with $S$ (starting from the leftmost letter).
Comparing $A_i^p A_iS$ and $A_i^p SA_i$ is thus equivalent to comparing the prefixes $A_i^p A_i$ and $A_i^p S$.

Notice that $A_i^p A_i=A_i^{p+1}$ and $A_i^p S=A_j$ are also prefixes of $M_i$ and $M_j$, respectively
(note that $A_i^{p+1}$ is a prefix of $M_i$ because $M_i$ is the $(\deg_{\SA}(A_i)+2)$-th power of $A_i$ and $\deg_{\SA}(A_i)\geq p$ as mentioned above).
Therefore, comparing $M_i$ and $M_j$ is also equivalent to comparing the two prefixes $A_i^p A_i$ and $A_i^p S$.

Altogether, comparing $R(A_i),R(A_j)$ is equivalent to comparing $M_i,M_j$.

\medskip \noindent \emph{Subcase 2.} \emph{$S$ is a prefix of $A_i$. (This means $S$ is a proper prefix of $A_i$ as $S\neq A_i$.)}

Assume $A_i=ST$. Comparing $A_i^p A_iS$ and $A_i^p SA_i$ is just the same as comparing $A_i^p STS$ and $A_i^p SST$.
It reduces to proving that comparing $M_i$ and $M_j$ also reduces to comparing $A_i^p STS$ and $A_i^p SST$.

First, we argue that $ST\neq TS$. Suppose to the opposite that $ST=TS$.
Applying Lemma~\ref{lemma:key-power-lemma} (condition~1), $S=C^k$ and $T=C^l$ for some $C,k,l$.
This implies that $A_i$ and $A_j$ are both powers of $C$, and hence $R(A_i)=R(A_j)$, contradictory.

Observe that $A_i^p STS$ is a prefix of $A_i^p STST=A_i^{p+2}$, which is a prefix of $M_i$
  (because $M_i$ is the $(\deg_\SA(A_i)+2)$-th power of $A_i$ and $\deg_\SA(A_i)+2\geq p+2$).

Observe that $p>0$. Otherwise $A_j=A_i^0 S$ is shorter than $A_i$, which contradicts our assumption $|A_i|\leq |A_j|$.
As a corollary, $A_i$ is a prefix of $A_j=A_i^p S$.
Therefore, $A_i^p SST=A_i^p S A_i= A_jA_i$ is a prefix of $A_j^2$, which is a prefix of $M_j$.

To sum up, $M_i$ and $M_j$ admit $A_i^p STS$ and $A_i^p SST$ as prefixes, respectively.
Further since $TS\neq ST$, comparing $M_i,M_j$ reduces to comparing $A_i^p STS$ and $A_i^p SST$, which
is equivalent to comparing $R(A_i),R(A_j)$ as mentioned above. \qed
\end{proof}

Assume $A_1,\ldots,A_n$ are \textbf{distinct} henceforth in this section. To this end,
 we can use a trie to reduce those duplicate elements in $A_1,\ldots,A_n$, which is trivial.

\begin{lemma}\label{lemma:sum-to-L}
When $A_1,\ldots,A_n$ are primitive and distinct, $\sum_i |N_i| =  O(L)$.
\end{lemma}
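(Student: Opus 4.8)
The plan is to rewrite the target sum in terms of degrees and periods, reduce it to a per-word estimate, and then settle that estimate with the combinatorics of squares (the three-squares lemma) together with the Fine--Wilf periodicity theorem. Observe first that $|N_i|=\deg_\SA(A_i)\cdot|A_i|$, and that $\deg_\SA(A_i)\geq 1$ always (since $A_i\in\SA$). The indices with $\deg_\SA(A_i)=1$ contribute $\sum_i|A_i|\leq L$, so it suffices to bound the contribution of the indices with $d_i:=\deg_\SA(A_i)\geq 2$. For each such $i$ fix a word $W_i=A_{j(i)}\in\SA$ achieving the maximum in $\deg_\SA(A_i)=\max_j\deg_{A_j}(A_i)$; then $A_i^{d_i}$ is a prefix of $W_i$ and $\deg_{W_i}(A_i)=d_i$. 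Grouping the $i$'s by their witness word $W_i$ reduces everything to the following per-word claim: for a fixed word $W$ of length $m$,
\[
\sum_{P}\ \deg_W(P)\cdot|P|\ =\ O(m),
\]
where the sum ranges over the \emph{distinct primitive} words $P$ such that $P^2$ is a prefix of $W$. Indeed, summing this over all witness words $W$ yields $O(\sum_j|A_j|)=O(L)$, which together with the degree-one part gives the lemma.

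To attack the per-word claim I would pass to periods. If $P$ is primitive and $P^2$ is a prefix of $W$, then $|P|\leq m/2$ is a period of the maximal prefix of $W$ having that period; call this prefix length $\ell_P$, so that $\deg_W(P)\cdot|P|\leq\ell_P\leq m$. Hence it is enough to show $\sum_P\ell_P=O(m)$. Distinctness of the words $A_i$ guarantees that distinct $P$ give distinct periods (each length determines a unique prefix of $W$), so I may list the periods as $p_1<\cdots<p_r$ with run lengths $\ell_1,\dots,\ell_r$ and each $\ell_k\geq 2p_k$. Two facts drive the argument. First, \emph{Fine and Wilf}: for $a<b$, if $\min(\ell_a,\ell_b)\geq p_a+p_b$ then $W[1..p_a+p_b]$ has periods $p_a$ and $p_b$, hence period $\gcd(p_a,p_b)$; primitivity of $P_a$ (and of $P_b$) forces $p_a\mid p_b$ and then makes $P_b$ non-primitive, a contradiction. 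So $\min(\ell_a,\ell_b)<p_a+p_b$ for all $a<b$. Second, the \emph{three-squares lemma} gives $p_{k+1}\geq p_k+p_{k-1}$, and in particular $p_{k+2}>2p_k$, so the periods grow at least geometrically.

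With these two facts I would finish by a bucketing argument that avoids the spurious logarithmic factor. Partition the runs according to $\ell_k\in(m\,2^{-(s+1)},\,m\,2^{-s}]$ for $s=0,1,2,\dots$. For any two runs in bucket $s$ (indeed among all runs with $\ell>m\,2^{-(s+1)}$), the Fine--Wilf bound forces the larger period to exceed $m\,2^{-(s+2)}$; thus all but possibly one run with $\ell>m\,2^{-(s+1)}$ has period in $(m\,2^{-(s+2)},\,m/2]$. Since this interval has length-ratio $2^{s+1}$ and the periods more than double every two steps, it contains only $O(s)$ periods, so bucket $s$ holds $O(s)$ runs, each of length at most $m\,2^{-s}$. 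Summing, $\sum_P\ell_P\leq\sum_{s\geq 0}O(s)\cdot m\,2^{-s}=O(m)\sum_{s\geq 0}s\,2^{-s}=O(m)$, proving the per-word claim.

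The routine parts are the reduction and the degree-one bookkeeping; the genuine obstacle is the per-word estimate, and specifically pushing it from the easy $O(m\log m)$ (there are only $O(\log m)$ primitively rooted square prefixes, each run of length $\leq m$) down to the sharp $O(m)$. That improvement is exactly what the geometric growth of the periods (three-squares) combined with the pairwise Fine--Wilf incompatibility buys us, and the bucketing is the device that converts these two facts into a convergent geometric sum. If one prefers to keep the proof self-contained, the three-squares input can be replaced by a direct Fine--Wilf argument bounding the number of primitively rooted square prefixes within any length window, but citing the three-squares lemma is the cleanest route.
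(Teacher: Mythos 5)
Your proof is correct, but it takes a genuinely different route from the paper's. The paper groups the words $N_i$ into prefix chains headed by \emph{extremal} elements (those $N_i$ not a prefix of any other $N_j$): it shows the head of each chain satisfies $N_{i_x}=A_{i_x}$, and that consecutive chain elements satisfy $|N_{i_j}|<|A_{i_j}|+|A_{i_{j+1}}|$ --- the latter by exactly the periodicity principle you invoke as Fine--Wilf, namely condition~2 of its Lemma~2 (two primitive words whose powers share a common prefix of length at least the sum of their lengths must coincide). Telescoping these bounds within each chain gives $\sum_i |N_i|=O(L)$ with a small explicit constant, using nothing beyond the paper's own lemmas. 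You instead split off the degree-one terms, group the remaining ones by a witness word, and reduce to a per-word claim --- that the total weighted length of primitively-rooted square prefixes of a word $W$ is $O(|W|)$ --- which you settle with Fine--Wilf, the three-squares lemma, and a dyadic bucketing. Your reduction and the per-word claim are sound, but the machinery is heavier than needed: once you have your two facts (pairwise Fine--Wilf incompatibility, and the Fibonacci-type growth $p_{k+1}\geq p_k+p_{k-1}$), applying the first to \emph{consecutive} runs already gives $\ell_k<p_k+p_{k+1}$ for $k<r$ (the minimum cannot be $\ell_{k+1}$, since $2p_{k+1}\leq \ell_{k+1}<p_k+p_{k+1}$ would force $p_{k+1}<p_k$), while the second gives $\sum_k p_k\leq 2(p_r+p_{r-1})\leq 4p_r\leq 2m$; hence $\sum_k \ell_k < 2\sum_k p_k+\ell_r\leq 5m$ directly, with no bucketing at all. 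In summary: the paper's chain decomposition buys self-containedness and a cleaner constant; your witness decomposition buys a reusable local statement about square prefixes of a single word, at the cost of importing the three-squares lemma, which is external to the paper.
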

\begin{proof}
First, we argue that $N_1,\ldots,N_n$ are distinct.
Suppose that $N_i=N_j~(i\neq j)$.
Recall that $N_i=A_i^m$ (for $m=\deg_\SA(A_i)$) and $N_j=A_j^n$ (for $n=\deg_\SA(A_j)$).
Applying Lemma~\ref{lemma:key-power-lemma} (condition~3), $A_i=C^k$ and $A_j=C^l$.
Further since $A_i,A_j$ are primitive, $A_i=C=A_j$, which contradicts the assumption that $A_i\neq A_j$.

We say $N_i$ \emph{extremal} if it is \textbf{not} a prefix of any word in $\{N_1,\ldots,N_n\}\setminus\{N_i\}$.
Partition $N_1,\ldots,N_n$ into several groups such that
  (a) for elements in the same group, one of them is the prefix of the other, and
  (b) the longest element in each group is extremal.
  (It is obvious that such a partition exists: we can first distribute the extremal ones to different groups,
    and then distribute the non-extremal ones to suitable group (each non-extremal one is a prefix of some extremal ones).

Now, consider any such group, e.g., $N_{i_1},\ldots,N_{i_x}$.
It suffices to prove that (X) $|N_{i_1}|+\ldots + |N_{i_x}| = O(|A_{i_1}|+\ldots+|A_{i_x}|)$, and we prove it in the following.
Without loss of generality, assume that $N_{i_j}$ is a prefix of $N_{i_{j+1}}$ for $j<x$.

We state two important formulas: (i) $N_{i_x}=A_{i_x}$. (ii) $|N_{i_j}|< |A_{i_j}|+|A_{i_{j+1}}|$ for $j<x$.
Equation~(X) above follows from formulas~(i) and (ii) immediately.

\medskip \noindent \emph{Proof of (i).} Suppose to the contrary that $N_{i_x}\neq A_{i_x}$. 
        By the definition of $N_{i_x}$,
           there exists some $A_j$ such that $N_{i_x}$ is a prefix of $A_j$.
           Clearly, $j\neq i_x$ since $N_{i_x}$ is not a prefix of $A_{i_x}$.
           Consequently, $N_{i_x}$ is a prefix of some other $N_j$, which means $N_{i_x}$ is not extremal,
           contradicting property (b) of the grouping mentioned above.

\medskip \noindent \emph{Proof of (ii).} Suppose to the contrary that $|N_{i_j}|\geq |A_{i_j}|+|A_{i_{j+1}}|$.
        Because $N_{i_j}$ and $N_{i_{j+1}}$ are powers of $A_{i,j}$ and $A_{i_{j+1}}$
           and share a common prefix, $N_{i_j}$, of length at least $|A_{i_j}|+|A_{i_{j+1}}|$.
           By Lemma~\ref{lemma:key-power-lemma}~(condition~2),
             $A_{i_j}=C^k$ and $A_{i_{j+1}}=C^l$ for some $C,k,l$. Hence $A_{i_j}=A_{i_{j+1}}$, as $A_{i_j}$ and $A_{i_{j+1}}$ are primitive. Contradictory.

\qed
\end{proof}

Our algorithm for sorting $R(A_1),\ldots,R(A_n)$ is simply as follows.

First, we build a trie of $A_1,\ldots,A_n$ and use it to compute $N_1,\ldots,N_n$.
In particularly, for computing $N_i$, we walk along the trie from the root and search for maximal pieces of $A_i$, which
  takes $O(|N_i|+|A_i|)=O(|N_i|)$ time.
The total running time for computing $N_1,\ldots,N_n$ is therefore $O(\sum_i|N_i|)=O(L)$.

Second, we compute $M_1,\ldots,M_n$ and build a trie of them. By utilizing this trie, we obtain the lexicographic order of $M_1,\ldots,M_n$, which equals the order of $R(A_1),\ldots,R(A_n)$ according to Lemma~\ref{lemma:relation-M}.
The running time of the second step is $\sum_i|M_i|=\sum_i|N_i| + 2\sum_i|A_i|=O(L)+O(L)=O(L)$.

To sum up, we obtain

\begin{theorem}
Problem~1' can be solved in $O(L)=O(\sum_i |A_i|)$ time.
\end{theorem}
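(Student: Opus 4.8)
The plan is to combine the two structural lemmas just established with standard linear-time trie machinery. By Lemma~\ref{lemma:relation-M}, the desired order on $R(A_1),\ldots,R(A_n)$ coincides with the lexicographic order on the finite words $M_1,\ldots,M_n$, so it suffices to compute the $M_i$ explicitly and then sort them lexicographically. Since each $M_i = N_i A_i^2$ is a power of $A_i$, the whole task decomposes into (a) computing the words $N_i$ (equivalently the degrees $\deg_\SA(A_i)$), (b) forming the $M_i$, and (c) lexicographically sorting a collection of strings. I would carry these phases out in that order, charging the running time of each phase against the total length $\sum_i |M_i|$.

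For phase (a), I would first build a trie $T$ of the input words $A_1,\ldots,A_n$ in $O(L)$ time. To compute $N_i$ I walk down $T$ from the root, repeatedly matching copies of $A_i$: each matched block advances me by $|A_i|$ characters along a root path of $T$, and $\deg_\SA(A_i)$ is the number of full copies of $A_i$ that can be laid along some root-to-node path. This walk costs $O(|N_i| + |A_i|) = O(|N_i|)$ time, so the total cost of phase (a) is $O(\sum_i |N_i|)$.

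The crucial point, and what I expect to be the only real obstacle, is bounding $\sum_i |N_i|$. A priori the degrees $\deg_\SA(A_i)$ could be large (for instance when many $A_i$ are prefixes of long powers of a short word), so it is not obvious that the $N_i$ stay small enough to keep the running time linear. This is exactly the content of Lemma~\ref{lemma:sum-to-L}, which exploits the primitivity and distinctness of the $A_i$ together with Lemma~\ref{lemma:key-power-lemma} to prove $\sum_i |N_i| = O(L)$. Granting that bound, phase (a) runs in $O(L)$ time, and forming the $M_i = N_i A_i^2$ in phase (b) appends only $2\sum_i |A_i| = 2L$ further characters, so $\sum_i |M_i| = O(L)$ as well.

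Finally, for phase (c) I would build a trie of $M_1,\ldots,M_n$ in $O(\sum_i |M_i|) = O(L)$ time and read off the lexicographic order of the $M_i$ by a depth-first traversal that visits children in alphabetical order; here the assumption of a fixed alphabet $\Sigma$ keeps the per-node branching cost constant. By Lemma~\ref{lemma:relation-M}, the order so obtained is precisely the sorted order of $R(A_1),\ldots,R(A_n)$, which solves Problem~1'. Summing the costs of the three phases yields the claimed $O(L)$ bound.
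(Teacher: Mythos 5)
Your proposal is correct and follows essentially the same route as the paper: build a trie of the inputs to compute each $N_i$ in $O(|N_i|+|A_i|)$ time, bound $\sum_i |N_i| = O(L)$ via Lemma~\ref{lemma:sum-to-L}, form the words $M_i = N_i A_i^2$ and sort them lexicographically with a second trie, and invoke Lemma~\ref{lemma:relation-M} to conclude that this order is the desired order on $R(A_1),\ldots,R(A_n)$. The only detail the paper adds is a trivial preprocessing step that removes duplicate inputs (again via a trie), so that the distinctness hypothesis of Lemma~\ref{lemma:sum-to-L} — which you correctly identify as the crux of the linear-time bound — actually applies.
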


In addition, we can solve Problem~1 within the same time bound.
\begin{theorem}\label{thm:sorting-in-linear}
Problem~1 can be solved in $O(L)=O(\sum_i |A_i|)$ time.
\end{theorem}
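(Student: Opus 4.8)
The plan is to reduce Problem~1 to Problem~1$'$ by a linear-time preprocessing step that replaces each input word by its primitive root, and then invoke Theorem~5 (the $O(L)$ solution for the primitive case). The key observation justifying this reduction is the identity $R(A)=R(\rt(A))$ already noted in the preliminaries: since the repeating string of a word equals the repeating string of its root, sorting $R(A_1),\ldots,R(A_n)$ is the same as sorting $R(\rt(A_1)),\ldots,R(\rt(A_n))$, and the latter words are all primitive. Thus if I can compute all the roots in $O(L)$ time, the problem is essentially solved.

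First I would compute $\rt(A_i)$ for every $i$ using Lemma~\ref{lemma:root}. For each word $A_i$, I need the largest $j<|A_i|$ such that the length-$j$ prefix equals the length-$j$ suffix; this is exactly the value given by the failure function of the Knuth--Morris--Pratt algorithm evaluated at the last position of $A_i$. Computing the failure function of a single word of length $|A_i|$ takes $O(|A_i|)$ time, so computing it for all words takes $O(\sum_i|A_i|)=O(L)$ time. From $j$ I obtain $k=|A_i|-j$, test whether $k$ divides $|A_i|$, and read off $|\rt(A_i)|$ from the lemma; the root itself is then the length-$|\rt(A_i)|$ prefix of $A_i$, extracted in $O(|A_i|)$ time. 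Hence all roots are produced in $O(L)$ total time, and crucially $\sum_i|\rt(A_i)|\leq\sum_i|A_i|=L$, so the total length does not grow.

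Next I would feed the primitive roots $\rt(A_1),\ldots,\rt(A_n)$ into the algorithm of Theorem~5 to obtain a permutation sorting them by their repeating strings in $O(L)$ time; by the identity above this permutation simultaneously sorts the original words $A_1,\ldots,A_n$ by $R(A_i)$. Summing the $O(L)$ cost of the root computation and the $O(L)$ cost of the primitive-case sort yields the claimed $O(L)$ bound, completing the proof of Theorem~\ref{thm:sorting-in-linear}.

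I do not expect a serious obstacle here, since the reduction is conceptually immediate once $R(A)=R(\rt(A))$ is in hand. The only point requiring a little care is confirming that the correspondence $A_i\mapsto\rt(A_i)$ preserves the comparison outcome exactly, including ties: when $\rt(A_i)=\rt(A_j)$ the two repeating strings are equal and the relative order of $A_i$ and $A_j$ is immaterial, so any stable tie-breaking in the primitive-case sort transfers back correctly. A secondary detail is that Theorem~5 is stated for \emph{distinct} primitive words, whereas distinct $A_i$ may share a common root; this is handled exactly as in the previous section by collapsing duplicate roots through a trie before sorting and redistributing the shared order afterward, which again costs only $O(L)$.
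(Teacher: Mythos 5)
Your proposal is correct and follows essentially the same route as the paper: replace each $A_i$ by $\rt(A_i)$ (justified by $R(A)=R(\rt(A))$), compute the roots in $O(|A_i|)$ time each via the KMP failure function together with Lemma~\ref{lemma:root}, and then invoke the linear-time algorithm for Problem~1$'$. The extra care you take about ties and duplicate roots matches what the paper handles implicitly with its trie-based deduplication, so there is no substantive difference.
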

\begin{proof}
It remains to showing that $\rt(A_i)$ can be computed in $O(|A_i|)$ time.

Applying Lemma~\ref{lemma:root}, computing $\rt(A)$ reduces to
  finding the largest $j<|A|$ such that the prefix of $A$ with length $j$ equals the suffix of $A$ with length $j$.
Moreover, the famous KMP algorithm \cite{KMP} finds this $j$ in $O(|A|)$ time. \qed
\end{proof}

As a comparison, there exists a less efficient algorithm for solving Problem~1,
  which is based on a standard sorting algorithm associated with a na\"{i}ve gadget for comparing $R(A)$ and $R(B)$ -- according to Lemma~\ref{lemma:R(A)R(B)-AB-BA}, comparing $R(A)$ and $R(B)$ reduces to comparing $AB$ and $BA$, which takes $O(|A|+|B|)$ time.
  The time complexity of this alternative algorithm is higher.
  For example, when $A_1$=``aaaaaa1'', $A_2$=``aaaaaa2'', etc,
   the running time would be $\Omega(n \log n |A_1|)= \Omega(L\log n)$.

\section{The string rearrangement inequalities}

We call equation (\ref{eqn:string-rearrangement-inequalities}) right below \emph{the String Rearrangement Inequalities}.

\begin{lemma}\label{lemma:string-rearrangement-inequalities}
For non-empty words $A_1, \ldots, A_n$, where $R(A_1)\leq \ldots \leq R(A_n)$,
we claim that
  \begin{equation}\label{eqn:string-rearrangement-inequalities}
    A_1A_2\ldots A_n \leq A_{\pi_1}A_{\pi_2}\ldots A_{\pi_n} \leq A_nA_{n-1}\ldots A_1,
  \end{equation}
  for any permutation $\pi_1,\ldots,\pi_n$ of $\{1,\ldots,n\}$.

In other words, if several words are to be rearranged and concatenated into a string $S$,
     the lexicographical smallest outcome of $S$ occurs when the words are arranged so that their repeating strings are increasing,
and the lexicographical largest outcome of $S$ occurs when the words are arranged so that their repeating strings are decreasing.
Here, the repeating string of a word $A$ refers to $R(A)$.
\end{lemma}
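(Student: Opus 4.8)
\emph{Proof proposal.} The plan is to prove both inequalities by a standard exchange (bubble-sort) argument, using Lemma~\ref{lemma:R(A)R(B)-AB-BA} to control the effect of swapping two adjacent words. The first step is to establish an \emph{exchange lemma}: for any words $X,Y$ (possibly empty) and non-empty words $A,B$,
\[
  XABY \leq XBAY \quad\Longleftrightarrow\quad R(A)\leq R(B).
\]
Prepending the common block $X$ does not affect the lexicographic comparison (the first $|X|$ letters agree), and since $AB$ and $BA$ have the \emph{same} length $|A|+|B|$, appending the common block $Y$ does not affect it either: if $AB$ and $BA$ differ, their first point of difference lies within the first $|A|+|B|$ letters, and if they are equal then $ABY=BAY$. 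Hence the comparison of $XABY$ and $XBAY$ reduces exactly to that of $AB$ and $BA$, which by Lemma~\ref{lemma:R(A)R(B)-AB-BA} is precisely the comparison of $R(A)$ and $R(B)$.

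Next I would prove the left inequality, namely that the non-decreasing arrangement minimizes the concatenation. Take an arbitrary permutation $\pi$ and regard $A_{\pi_1}\cdots A_{\pi_n}$ as a sequence of blocks. Whenever two \emph{adjacent} blocks form an inversion, i.e.\ the left block satisfies $R(A_{\pi_i})>R(A_{\pi_{i+1}})$, the exchange lemma (applied with $X=A_{\pi_1}\cdots A_{\pi_{i-1}}$ and $Y=A_{\pi_{i+2}}\cdots A_{\pi_n}$) shows that swapping the two blocks does not increase the concatenated string. Running bubble sort, each swap removes an inversion and never increases the value, and the process terminates at an arrangement whose repeating strings are non-decreasing. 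Since $R(A_1)\leq\cdots\leq R(A_n)$, the target arrangement $A_1\cdots A_n$ is such a sorted arrangement, yielding $A_1\cdots A_n \leq A_{\pi_1}\cdots A_{\pi_n}$.

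One point that needs care is that the non-decreasing arrangement need not be unique when several repeating strings coincide. I would resolve this by observing that if $R(A_i)=R(A_j)$ then, by Lemma~\ref{lemma:R(A)R(B)-AB-BA}, $A_iA_j=A_jA_i$, so interchanging two tied blocks leaves the whole string unchanged; hence every non-decreasing arrangement produces the same string, and bubble sort indeed lands on the value $A_1\cdots A_n$. The right inequality then follows by the symmetric argument: bubble-sorting into \emph{non-increasing} order (swapping whenever the left block has a strictly smaller repeating string) never decreases the value and terminates at $A_nA_{n-1}\cdots A_1$. The main obstacle, though a modest one, is the exchange lemma's reduction past the trailing block $Y$, which relies crucially on $AB$ and $BA$ having equal length; once that is secured, the remainder is a routine inversion-counting argument, and transitivity of the underlying order (guaranteed because comparison of the infinite strings $R(A_i)$ is a total preorder) ensures the bubble sort is well defined.
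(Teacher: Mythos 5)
Your proposal is correct and follows essentially the same route as the paper: an adjacent-swap exchange argument powered by Lemma~\ref{lemma:R(A)R(B)-AB-BA}, together with the observation that a swap affects the comparison only locally because $AB$ and $BA$ have equal length. The only cosmetic difference is that you bubble-sort to an arbitrary non-decreasing arrangement and then argue that tied blocks commute, whereas the paper selection-sorts $A_1, A_2, \ldots$ directly into their target positions (which sidesteps the tie issue); both arguments are sound, and yours makes explicit two details (the exchange lemma with surrounding context $X,Y$, and the handling of equal repeating strings) that the paper leaves implicit.
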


\begin{example}\label{example:bad}
Suppose there are four given words: ``123'', ``12'', ``121'', ``1212''.
Notice that $R(121)<R(12)=R(1212)<R(123)$. Applying Lemma~\ref{lemma:string-rearrangement-inequalities},
the lexicographical smallest outcome would be ``121121212123'',
and the lexicographical largest outcome would be ``123121212121''.
The reader can verify this result easily.
\end{example}

\begin{remark}
If we sort the given words using the lexicographic order instead, the outcome of the concatenation is not optimum.
For example, we have ``12'' < ``121'' < ``1212'' <``123'',
and a concatenation in this order is not the smallest outcome, and
 a concatenation in its reverse order is neither the largest outcome.
\end{remark}

\begin{proof}[of Lemma~\ref{lemma:string-rearrangement-inequalities}]
Consider any concatenation $A_{\pi_1}\ldots A_{\pi_n}$.
If $A_1$ is not at the leftmost position,
  we swap it with its left neighbor $A_x$.
  Note that $R(A_1)\leq R(A_x)$ by assumption.
  According to Lemma~\ref{lemma:R(A)R(B)-AB-BA}, $A_1A_x\leq A_xA_1$.
  This means that the entire string becomes smaller or remains unchanged after the swapping.
  Applying several such swappings, $A_1$ will be on the leftmost position.
  Then, we swap $A_2$ to the second place. So on and so forth.
  It follows that $A_1\ldots A_n \leq A_{\pi_1}\ldots A_{\pi_n}$.

The other inequality in (\ref{eqn:string-rearrangement-inequalities}) can be proved symmetrically; proof omitted.
\qed
\end{proof}

Combining Theorem~\ref{thm:sorting-in-linear} with Lemma~\ref{lemma:string-rearrangement-inequalities}, we obtain
\begin{corollary}
Given $n$ words $A_1,\ldots,A_n$ that are to be rearranged and concatenated,
     the smallest and largest concatenation can be found in $O(\sum_i|A_i|)$ time.
\end{corollary}

Another corollary of Lemma~\ref{lemma:string-rearrangement-inequalities} is the uniqueness of the best concatenation:

\begin{corollary}
Given primitive and distinct words $A_1,\ldots,A_n$ that are to be rearranged and concatenated,
     the smallest (largest, resp.) concatenation is unique.
\end{corollary}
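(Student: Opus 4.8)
The plan is to upgrade the non-strict inequalities of Lemma~\ref{lemma:string-rearrangement-inequalities} to strict ones, exploiting primitivity and distinctness. The crucial preliminary step is to show that, for primitive and distinct words, the repeating strings are pairwise distinct: $R(A_i)\neq R(A_j)$ whenever $i\neq j$. I would prove this by contradiction. If $R(A_i)=R(A_j)$, then by Lemma~\ref{lemma:R(A)R(B)-AB-BA} we get $A_iA_j=A_jA_i$, and applying Lemma~\ref{lemma:key-power-lemma} (condition~1) yields $A_i=C^k$ and $A_j=C^l$ for a common word $C$. Since $A_i$ and $A_j$ are primitive, both equal $C$, contradicting $A_i\neq A_j$. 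Hence the repeating strings form a strictly ordered chain.

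Given this, the sorted order $R(A_{\sigma(1)})<\cdots<R(A_{\sigma(n)})$ is strict, so the sorting permutation $\sigma$ is uniquely determined. By Lemma~\ref{lemma:string-rearrangement-inequalities}, $A_{\sigma(1)}\cdots A_{\sigma(n)}$ is a smallest concatenation; it remains to show it is the unique one. I would take any permutation $\pi\neq\sigma$ and note that, since $\sigma$ is the only permutation sorting the (now distinct) repeating strings, $\pi$ must contain an adjacent inversion, i.e.\ a position $k$ with $R(A_{\pi(k)})>R(A_{\pi(k+1)})$. Swapping this adjacent pair replaces a factor $A_{\pi(k)}A_{\pi(k+1)}$ by $A_{\pi(k+1)}A_{\pi(k)}$ while leaving the surrounding prefix and suffix untouched; by Lemma~\ref{lemma:R(A)R(B)-AB-BA}, the strict relation $R(A_{\pi(k+1)})<R(A_{\pi(k)})$ gives $A_{\pi(k+1)}A_{\pi(k)}<A_{\pi(k)}A_{\pi(k+1)}$, and because the two factors have equal length the whole concatenation strictly decreases. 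Thus $\pi$ cannot attain the minimum, and $\sigma$ is the unique minimizer. The argument for the largest concatenation is symmetric.

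The main obstacle is precisely the strictness, which lives entirely in the first step: without primitivity or distinctness one may have $R(A_i)=R(A_j)$ (for instance $A_i=$``12'' and $A_j=$``1212'' in Example~\ref{example:bad}), in which case swapping the corresponding adjacent words leaves the concatenation unchanged and the minimizer is no longer unique. Once pairwise distinctness of the repeating strings is established, the remainder is a routine strengthening of the bubble-sort swap argument already used to prove Lemma~\ref{lemma:string-rearrangement-inequalities}, with every swap now strictly decreasing rather than merely non-increasing. I would also remark that this reasoning in fact shows distinct permutations yield distinct concatenations, so uniqueness holds both as a permutation and as a resulting string.
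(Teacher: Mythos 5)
Your main argument is correct and is essentially the paper's own: the paper also first establishes that distinct primitive words have distinct repeating strings (its Proposition~\ref{prop:dinstinct-R}, proved by applying condition~3 of Lemma~\ref{lemma:key-power-lemma} to $A_i^{|A_j|}=A_j^{|A_i|}$, whereas you go through Lemma~\ref{lemma:R(A)R(B)-AB-BA} and condition~1 --- both routes are one-liners and equally valid), and then invokes Lemma~\ref{lemma:string-rearrangement-inequalities}. Your adjacent-inversion swap, together with the observation that the swapped factors have equal length (so the strict inequality propagates past any suffix), is precisely the strictness argument that the paper's terse proof leaves implicit; spelling it out is a genuine improvement in rigor, not a different approach.

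However, your closing remark --- that the reasoning shows distinct permutations yield distinct concatenations --- is false, and is not implied by what you proved. Your swap argument only compares an arbitrary permutation $\pi$ against the unique sorting permutation $\sigma$; it says nothing about two non-optimal permutations relative to each other. Indeed the claim fails: take $A_1$ = ``ab'', $A_2$ = ``a'', $A_3$ = ``bab'', which are primitive and distinct, yet $A_1A_2A_3 = A_2A_3A_1$ = ``ababab''. (The unique minimum there is $A_2A_1A_3$ = ``aabbab'', consistent with the corollary.) Uniqueness of the minimizer --- as a permutation and hence as a string --- does not need this stronger claim, so your proof of the corollary stands; simply delete the remark.
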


\begin{proof}
It follows from Lemma~\ref{lemma:string-rearrangement-inequalities} and the fact that $R(A_1),\ldots,R(A_n)$ are distinct (see Proposition~\ref{prop:dinstinct-R} below).\qed
\end{proof}

\begin{proposition}\label{prop:dinstinct-R}
For distinct primitive words $A$ and $B$, we have $R(A)\neq R(B)$.
\end{proposition}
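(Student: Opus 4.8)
The plan is to prove the contrapositive: assuming $R(A)=R(B)$ for primitive words $A$ and $B$, I would show $A=B$. This is essentially the equality case that already appears inside the proof of Lemma~\ref{lemma:relation-M}, so the argument should be short and should reuse the machinery already established.

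First I would reduce the infinite equality $R(A)=R(B)$ to a finite equation between powers. Writing $R(A)=AAA\ldots$ and $R(B)=BBB\ldots$, the fact that these two infinite strings agree letter-by-letter means that any sufficiently long prefix of one is a prefix of the other. Concretely, I would take $m_1=|B|$ and $m_2=|A|$ and observe that $A^{m_1}$ and $B^{m_2}$ are both prefixes of $R(A)=R(B)$ of the common length $|A|\cdot|B|$, hence $A^{m_1}=B^{m_2}$. This converts the statement about infinite strings into exactly the hypothesis of condition~3 of Lemma~\ref{lemma:key-power-lemma}.

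Next I would apply Lemma~\ref{lemma:key-power-lemma} (condition~3) to $A^{m_1}=B^{m_2}$, which yields a word $C$ and integers $k,l$ with $A=C^k$ and $B=C^l$. Now the primitivity hypothesis does the rest: since $A=C^k$ is primitive, Definition~\ref{def:primitive} forces $k=1$, so $A=C$; likewise $B=C^l$ primitive forces $l=1$, so $B=C$. Therefore $A=C=B$, contradicting the assumption that $A$ and $B$ are distinct. Restated positively, distinct primitive $A,B$ cannot have $R(A)=R(B)$.

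I do not anticipate a genuine obstacle here; the only point requiring a little care is the very first reduction, namely justifying rigorously that $R(A)=R(B)$ implies the finite identity $A^{|B|}=B^{|A|}$ rather than merely a long common prefix. The cleanest way is to note that two infinite periodic strings that agree everywhere must agree on the prefix of length $|A|\cdot|B|$, and that this prefix is simultaneously $A$ repeated $|B|$ times and $B$ repeated $|A|$ times; alternatively one can invoke condition~2 of Lemma~\ref{lemma:key-power-lemma} directly, since $R(A)$ and $R(B)$ trivially share a common prefix of length far exceeding $|A|+|B|$, again producing the common root $C$ and finishing in the same way.
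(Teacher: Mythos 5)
Your proposal is correct and matches the paper's own argument exactly: the paper proves this proposition by citing the second paragraph of the proof of Lemma~\ref{lemma:relation-M}, which is precisely your reduction $R(A)=R(B) \Rightarrow A^{|B|}=B^{|A|}$, followed by Lemma~\ref{lemma:key-power-lemma} (condition~3) and the primitivity argument forcing $A=C=B$. Your closing remark about invoking condition~2 instead is a fine alternative, but the main route you describe is the same one the paper takes.
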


\begin{proof}
Recall that when $A$ and $B$ are primitive and $R(A)=R(B)$, we can infer that $A=B$
(as proved in the second paragraph of the proof of Lemma~\ref{lemma:relation-M}).
Therefore, if $A$ and $B$ are primitive and distinct, $R(A)\neq R(B)$. \qed
\end{proof}

\section{A total order $\Oinf$ on words}\label{sect:order}

\begin{definition}
Given primitive words $A$ and $B$, we state that $A \Oinf B$ if $R(A)\leq R(B)$.
Notice that $\Oinf$ is \textbf{a total order on primitive words} by Proposition~\ref{prop:dinstinct-R}.
Furthermore, we extend $\Oinf$ to the scope of finite nonempty words as follows.

For non-empty words $A=S^k$ and $B=T^l$, where $S,T$ are primitive, we state that $A\Oinf B$ if
\begin{equation}
\hbox{($S=T$ and $|S|\leq |T|$), or ($S\neq T$ and $S\Oinf T$).}
\end{equation}
The symbol $\Oinf$ in the equation stands for the relation between primitive words.
\end{definition}

For example, $121 \Oinf 12 \Oinf 1212 \Oinf 121212 \Oinf 122$.

Obviously, the relation $\Oinf$ is \textbf{a total order on finite nonempty words}.

\medskip The next lemma shows that within the class of Lyndon words, the order $\Oinf$ is actually the same as the lexicographical order $\leq_{\mathsf{lex}}$ (denoted by $\leq$ for short).
(Note that Lyndon words are primitive, so the unextended $\Oinf$ is enough here.)

\begin{lemma}
Given Lyndon words $A$ and $B$ such that $A\leq B$, we have $A\Oinf B$.
\end{lemma}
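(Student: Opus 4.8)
The plan is to prove the statement $A \leq B \Rightarrow A \Oinf B$ for Lyndon words $A,B$ by unwinding both orders in terms of the comparison $AB$ versus $BA$ and exploiting the defining property of Lyndon words. By Lemma~\ref{lemma:R(A)R(B)-AB-BA}, the relation $A \Oinf B$ (i.e.\ $R(A) \le R(B)$) is exactly the relation $AB \le BA$. So the goal reduces to showing that for Lyndon words, lexicographic order $A \le B$ implies $AB \le BA$. Since $A,B$ are distinct Lyndon words (if $A=B$ both claims are trivial by reflexivity), and Lyndon words are primitive, Proposition~\ref{prop:dinstinct-R} guarantees $R(A) \ne R(B)$, hence $AB \ne BA$; so it suffices to rule out the strict case $A < B$ together with $AB > BA$, equivalently to derive $AB < BA$ from $A < B$.

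First I would recall the defining property I intend to use: a Lyndon word is strictly smaller, in lexicographic order, than all of its proper cyclic rotations; equivalently, a Lyndon word is the unique smallest word in its conjugacy class, and it is strictly smaller than any of its proper suffixes. I would split the comparison of $A$ and $B$ into two cases according to whether one is a prefix of the other. In the case where neither is a prefix of the other, the first position where $A$ and $B$ differ already determines both comparisons: that same mismatch appears at the corresponding position when comparing $AB$ with $BA$ (since the differing index occurs within the common-length prefix, before either word is exhausted), so $A < B$ forces $AB < BA$ directly, with no need for the Lyndon hypothesis.

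The interesting case, which I expect to be the main obstacle, is when $A$ is a proper prefix of $B$, say $B = AW$ with $W$ nonempty. Then $A < B$ holds automatically, and I must show $AB < BA$, i.e.\ $A\,AW < AW\,A$, which after cancelling the common prefix $AA$ reduces to comparing $W$ against the appropriate tail. Here I would use that $B = AW$ is Lyndon: being Lyndon, $B$ is strictly smaller than its proper suffix $W$ (a suffix of a Lyndon word starting strictly after the first position is lexicographically larger than the whole word), so $AW < W$. Comparing $A\,AW$ and $AW\,A$ I would line up the suffix $W$ obtained after the shared block and show the inequality $AW < W$ propagates to give $AAW < AWA$; concretely, $A$ being a prefix of $B=AW$ and $AW<W$ together pin down the first mismatch between $AAW$ and $AWA$ on the side of $AAW$, yielding $AB < BA$. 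Closing this chain of reductions cleanly — making sure the suffix comparison $AW<W$ is applied at exactly the right offset and that the prefix case cannot accidentally produce equality (excluded since $W\neq\varepsilon$ makes $B$ strictly longer, and distinctness plus primitivity forbids $AB=BA$) — is the delicate point, but it follows from the standard Lyndon suffix property once the positions are aligned.
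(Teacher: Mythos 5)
Your proposal is correct, and its skeleton matches the paper's: both reduce $A\Oinf B$ to the comparison of $AB$ with $BA$ via Lemma~\ref{lemma:R(A)R(B)-AB-BA}, both dispose of the non-prefix case by noting that the first mismatch between $A$ and $B$ occurs within the first $\min\{|A|,|B|\}$ positions and therefore settles $AB$ versus $BA$ as well, and both concentrate the Lyndon hypothesis in the case where $A$ is a proper prefix of $B$. The genuine difference is which Lyndon property carries that case. Writing $B=AC$ (paper) or $B=AW$ (you) with the second factor nonempty, the paper invokes the rotation property -- $B$ Lyndon gives $AC<CA$ -- and simply prepends $A$ to get $AB=AAC<ACA=BA$ in one line. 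You instead invoke the suffix property -- $B$ Lyndon gives $B=AW<W$ -- and then need the alignment argument: since $|AW|>|W|$, the inequality $AW<W$ forces a mismatch within the first $|W|$ positions (with $AW$ smaller there), and this mismatch transfers to $AW$ versus $WA$ because $W$ is a prefix of $WA$; prepending $A$ then gives $AAW<AWA$. Both properties are standard, equivalent characterizations of Lyndon words, so your route is legitimate; the paper's choice makes the prefix case immediate, while yours buys nothing extra here and costs the additional mismatch-position bookkeeping (which you do carry out correctly). Two small blemishes to fix: the phrase ``cancelling the common prefix $AA$'' is inaccurate, since $AA$ need not be a prefix of $AWA$ -- only the block $A$ can be cancelled, which is what your subsequent alignment argument actually does; and you should state explicitly that the remaining prefix case ($B$ a proper prefix of $A$) is vacuous because it would force $B<A$, contradicting $A<B$. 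The detour through Proposition~\ref{prop:dinstinct-R} to conclude $AB\neq BA$ is harmless but unnecessary, since you end up proving the strict inequality $AB<BA$ directly.
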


\begin{proof}
Assume that $A\neq B$; otherwise we have $R(A)=R(B)$ and so $A\Oinf B$.

By the assumption $A\leq B$, we know $A<B$. Consider two cases:

\smallskip \noindent \emph{1. $|A|\geq |B|$, or $|A|<|B|$ and $A$ is not a prefix of $B$}

Combining the assumption $A<B$ with the condition of this case, we can see that the relation between $AB,BA$ is the same as that between $A,B$: In comparing $AB$ and $BA$, the result is settled before the $\min\{|A|,|B|\}$-th character.

\smallskip \noindent \emph{2. $|A|<|B|$ and $A$ is a prefix of $B$, i.e., $A$ is a proper prefix of $B$}

Assume that $B=AC$ where $C$ is nonempty. Because $B$ is a Lyndon word by assumption, $AC<CA$.
Therefore, $AB=AAC<ACA=BA$.

\smallskip In both cases, we obtain $AB<BA$. It further implies that $R(A)<R(B)$ by Lemma~\ref{lemma:R(A)R(B)-AB-BA}.
This means $A\Oinf B$. \qed
\end{proof}

In fact, it is possible to further extend $\Oinf$ to all (finite and infinite) words.
Define the repeating string of an infinite word $A$, denoted by $R(A)$, to be $A$ itself.
We state that $A\Oinf B$ if $R(A)<R(B)$ or $R(A)=R(B)$ and $|A|\leq |B|$.

\section{Conclusions}

In this paper, we present a simple proof of the ``string rearrangement inequalities'' (\ref{eqn:string-rearrangement-inequalities}).
   These inequalities have not been reported in literature to the best of our knowledge.
  We also study the algorithmic aspect of these two inequalities,
    and present a linear time algorithm for rearranging the strings so that $R(A_1)\leq \ldots R(A_n)$.
      This algorithm beats the trivial sorting algorithm by a factor of $\log n$.

   The algorithm itself is direct (indeed, it looks somewhat brute-force) and easy to implement,
      yet the analysis of its correctness and complexity is build upon nontrivial observations,
         namely, Lemma~\ref{lemma:R(A)R(B)-AB-BA}, Lemma~\ref{lemma:relation-M}, and Lemma~\ref{lemma:sum-to-L}.

In the future, it is a problem worth attacking that whether we can improve
  the running time for sorting $R(A_1),\ldots,R(A_n)$ from $O(L)$ to $O(N)$, where
    $N$ denotes the number of nodes in the trie of $A_1,\ldots,A_n$.

The order $\Oinf$ on primitive words has nice connections with repeating decimals as shown in the proof
 of Lemma~\ref{lemma:R(A)R(B)-AB-BA}. It would be interesting to know whether these connections have more applications in the study of primitive words.


\bibliographystyle{splncs04}
\bibliography{RISC}

\newpage
\appendix
\section{An alternative proof of Lemma~\ref{lemma:R(A)R(B)-AB-BA}}

Below we show an alternative proof of Lemma~\ref{lemma:R(A)R(B)-AB-BA}.
This proof is less clever and much more involved (compared to the other proof in section~\ref{sect:pre}),
  yet it reflects more insights which helped us in designing our linear time algorithm.

Below we always assume that $A$, $B$, $X$, $Y$  are words.
\begin{definition}\label{def:truly_less}
\emph{Word $A$ is \textbf{truly less} than word $B$, if there exists a prefix pair $A_1A_2...A_i$ and $B_1B_2...B_i$, in which $A_1A_2...A_{i-1}$ and $B_1B_2...B_{i-1}$ are equal and $A_i$ is less than $B_i$. For convenience, let $A <_T B$ denote this case for the rest of this paper. Note that $i$ can be 1 such that $A_1$ is less than $B_1$.}
\end{definition}
\indent For any pair of nonempty words, $A$ and $B$, we can generalize 3 following properties with Definition~\ref{def:truly_less}. Note that any $X$ or $Y$ in the following properties can be any word, including empty word.\\

\begin{claim}[1]
Proposition $A <_T B$ is equivalent to $AX < BY$, if A is not prefix of B and B is not prefix of A.
\end{claim}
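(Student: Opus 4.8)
The plan is to prove Claim~1 by establishing both directions of the equivalence, using the hypothesis that neither $A$ nor $B$ is a prefix of the other. This hypothesis is exactly what guarantees that the comparison between $A$ and $B$ resolves within the first $\min\{|A|,|B|\}$ characters, so the trailing words $X$ and $Y$ never come into play.

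First I would prove the forward direction: assume $A <_T B$. By Definition~\ref{def:truly_less}, there is an index $i$ at which the two words first differ, with $A_1\ldots A_{i-1}=B_1\ldots B_{i-1}$ and $A_i<B_i$. Since $A_i$ and $B_i$ both exist, we have $i\leq\min\{|A|,|B|\}$, so both $A$ and $B$ have a character at position $i$; thus the concatenations $AX$ and $BY$ also agree on the first $i-1$ characters and then disagree at position $i$ (the $i$-th character of $AX$ is $A_i$ and of $BY$ is $B_i$, because $i\leq|A|$ and $i\leq|B|$ mean the concatenated tail hasn't started yet). Hence $AX<BY$.

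Next I would prove the reverse direction: assume $AX<BY$. The key observation is that because neither word is a prefix of the other, there must be a first position $i\leq\min\{|A|,|B|\}$ where $A$ and $B$ differ; if no such position existed, then one of $A,B$ would be a prefix of the other, contradicting the hypothesis. At this first differing position the characters $A_i,B_i$ are unequal, and this is precisely the position that decides the comparison of $AX$ and $BY$ (since it lies within both $|A|$ and $|B|$, the tails $X,Y$ are irrelevant). From $AX<BY$ we must therefore have $A_i<B_i$ rather than $A_i>B_i$, which by Definition~\ref{def:truly_less} gives $A<_T B$.

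The main obstacle, and the point requiring the most care, is translating the informal phrase ``neither is a prefix of the other'' into the precise guarantee that a strict disagreement occurs at some position $i\leq\min\{|A|,|B|\}$, and then checking that the concatenated tails $X,Y$ genuinely cannot affect the outcome at that position. I would make this rigorous by a short case split on whether the first mismatch occurs within the common length or not, noting that the no-prefix hypothesis rules out the degenerate case where the words agree on all of $\min\{|A|,|B|\}$ characters. Everything else is bookkeeping on indices.
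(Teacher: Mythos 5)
Your proof is correct and takes essentially the same approach as the paper's own (much terser) argument: the no-prefix hypothesis forces a strict mismatch between $A$ and $B$ at some position $i\leq\min\{|A|,|B|\}$, which simultaneously decides the comparison of $AX$ with $BY$ and witnesses $A<_T B$, so the tails $X,Y$ are irrelevant. The paper states this in two sentences; your version just spells out the index bookkeeping that the paper leaves implicit.
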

\begin{proof}
 If A is not prefix of B and B is not prefix of A, the proposition $AX < BY$ implies that A and B fits the case described in Definition~\ref{def:truly_less} and thus $A <_T B$ holds. The proposition $A <_T B$, by Definition~\ref{def:truly_less}, also indicates that $AX < BY$. \qed
\end{proof}

\begin{claim}[2]
If $A <_T B$, it holds that $AX <_T BY$.
\end{claim}
\begin{proof}
If $A <_T B$, by Definition~\ref{def:truly_less}, there exists a prefix pair $A_1,A_2...A_i$ and $B_1B_2...B_i$, in which $A_1A_2...A_{i-1}$ and $B_1B_2...B_{i-1}$ are equal and $A_i$ is less than $B_i$. Since $A$ is the prefix of $AX$ and $B$ is the prefix of $BY$, $AX$ and $BY$ also have the prefix pair $A_1A_2...A_i$ and $B_1B_2...B_i$ mentioned above, thus it holds that $AX <_T BY$ by Definition~\ref{def:truly_less}. \qed
\end{proof}

\begin{claim}[3]
If $A < B$ and $|A| = |B|$, it holds that $A <_T B$.
\end{claim}

\begin{proof}
 If $A < B$ and $|A| = |B|$, we can find a substring pair $A_1A_2...A_i$ and $B_1B_2...B_i$, in which $A_1A_2...A_{i-1}$ and $B_1B_2...B_{i-1}$ are equal and $A_i$ is less than $B_i$. This is exactly the case of Definition~\ref{def:truly_less}, so naturally $A <_T B$. \qed
\end{proof}

Now, we are ready for proving Lemma~\ref{lemma:R(A)R(B)-AB-BA}.

Recall that this lemma states for non-empty words $A$ and $B$, the relation between $R(A)$ and $R(B)$ is the same as the relation between $AB$ and $BA$.

We will prove $R(A)= R(B) \Leftrightarrow AB= BA $ and $R(A)< R(B) \Leftrightarrow AB< BA $.
  Note that $R(A)> R(B) \Leftrightarrow AB> BA $ can be obtained similarly.

\begin{proposition}\label{prop:1}
For nonempty words $A$ and $B$, $AB=BA\Leftrightarrow R(A)$=$R(B)$.
\end{proposition}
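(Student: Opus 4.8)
The plan is to prove the biconditional $AB=BA\Leftrightarrow R(A)=R(B)$ by treating the two directions separately, and to argue purely combinatorially (as befits the appendix) rather than through the repeating-decimal argument of Section~\ref{sect:pre}. The easy direction is $AB=BA\Rightarrow R(A)=R(B)$: by Lemma~\ref{lemma:key-power-lemma}~(condition~1), $AB=BA$ gives a common word $C$ with $A=C^k$ and $B=C^l$. Since $R(A)=R(C^k)=CCC\ldots=R(C)$ and likewise $R(B)=R(C)$, we immediately get $R(A)=R(B)$. I would write this first, as it requires no new machinery beyond the already-cited lemma.

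For the converse $R(A)=R(B)\Rightarrow AB=BA$, the approach is to turn the infinite equality $R(A)=R(B)$ into a finite one and then invoke Lemma~\ref{lemma:key-power-lemma}~(condition~3). The key step is the observation that if the two infinite strings $AAA\ldots$ and $BBB\ldots$ agree everywhere, then in particular they agree on their first $|A|\cdot|B|$ letters (or any convenient common multiple of $|A|$ and $|B|$), which means $A^{|B|}$ and $B^{|A|}$ are equal as finite words. Condition~3 of Lemma~\ref{lemma:key-power-lemma} then yields $A=C^k,\,B=C^l$ for a common $C$, and two powers of the same word commute, so $AB=C^{k}C^{l}=C^{k+l}=C^{l}C^{k}=BA$. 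This is the cleanest route and avoids any casework.

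Alternatively, and perhaps more in the spirit of this appendix, I would prove the converse using the contrapositive together with the newly introduced Claims~1--3 on the ``truly less'' relation $<_T$. If $AB\neq BA$, then one can show $A$ is not a prefix of $B$ nor $B$ a prefix of $A$ is \emph{not} automatic, so the honest statement is: either one is a proper prefix of the other or a genuine mismatch occurs. In the mismatch case Claim~1 gives $A<_T B$ or $B<_T A$, and then Claim~2 propagates this mismatch so that $R(A)$ and $R(B)$ disagree at the very position where $A$ and $B$ first differ, forcing $R(A)\neq R(B)$. The prefix case is where I expect the only real subtlety, since a shared prefix does not by itself produce an inequality, so I would need to feed the longer word's continuation back against the shorter word's repetition to locate a mismatch; this is exactly the kind of periodicity bookkeeping that the finite-power argument sidesteps.

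The main obstacle, then, is the prefix subcase of the combinatorial converse: ensuring that when (say) $A$ is a proper prefix of $B$ but $AB\neq BA$, the mismatch genuinely surfaces in the infinite repetitions rather than being masked by the periodic structure. I would handle this by noting that $R(A)=R(B)$ would force both to equal $R(\rt(A))=R(\rt(B))$ with $\rt(A)=\rt(B)$ (using Definition~\ref{def:root} and the uniqueness of the root), from which $AB=BA$ follows since both $A$ and $B$ are then powers of the common root. Because the finite-power route via condition~3 is strictly shorter and free of this hazard, I expect to present that as the primary proof and relegate the $<_T$-based argument to a remark, so the cleanest write-up reduces the whole proposition to a single application of Lemma~\ref{lemma:key-power-lemma}.
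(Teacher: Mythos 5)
Your primary proof is correct and is essentially identical to the paper's: both directions are reduced to the common-root decomposition $A=C^k$, $B=C^l$ via Lemma~\ref{lemma:key-power-lemma} (condition~1 for $AB=BA$, condition~3 for $R(A)=R(B)$ after passing to the finite equality $A^{|B|}=B^{|A|}$), the only difference being that you spell out the infinite-to-finite step that the paper leaves implicit. The alternative $<_T$-based argument you sketch is unnecessary, and you rightly relegate it to a remark.
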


\begin{proof}
From $AB=BA$ or $R(A)=R(B)$, we obtain from Lemma~\ref{lemma:key-power-lemma} that $A=C^k$ and $B=C^l$ for some $C,k,l$,
  which implies that $R(A)=R(B)$ and $AB=BA$.\qed
\end{proof}

\begin{proposition}\label{prop:2}
 For nonempty words $A$ and $B$, $AB<BA\Leftrightarrow R(A)<R(B)$.
\end{proposition}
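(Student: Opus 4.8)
The plan is to prove Proposition~\ref{prop:2}, namely $AB<BA \Leftrightarrow R(A)<R(B)$, by leveraging the trichotomy already established and the machinery of Claims~1--3. Since we have defined a total order on words and since Proposition~\ref{prop:1} already settles the equality case ($AB=BA \Leftrightarrow R(A)=R(B)$), the cleanest strategy is to prove one direction directly and obtain the converse by trichotomy. Specifically, I would first establish the forward implication $AB<BA \Rightarrow R(A)<R(B)$, and then argue the reverse direction for free: if $R(A)<R(B)$, then by Proposition~\ref{prop:1} we cannot have $AB=BA$, and if we had $AB>BA$ the symmetric form of the forward implication would give $R(A)>R(B)$, a contradiction; hence $AB<BA$.

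For the forward implication, the key idea is to relate the single comparison $AB$ versus $BA$ to the infinite comparison $R(A)=AAA\ldots$ versus $R(B)=BBB\ldots$. First I would dispose of the case where one of $A,B$ is a prefix of the other together with the equality-excluded situation, using Claim~1: if neither $A$ is a prefix of $B$ nor $B$ a prefix of $A$, then $AB<BA$ is equivalent to $A<_T B$ (truly less), and truly-less is a property decided strictly within the first $\min\{|A|,|B|\}$ characters, which are exactly the opening characters of $R(A)$ and $R(B)$; hence $R(A)<R(B)$ follows immediately. The substantive case is when, say, $A$ is a proper prefix of $B$ (the case $|A|=|B|$ reduces to Claim~3 and then Claim~1). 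Writing $B=AC$ with $C$ nonempty, the comparison $AB<BA$ becomes $AAC<ACA$, which by Claim~1 (after checking the prefix conditions) is equivalent to $AC<_T CA$, i.e.\ $B<_T CA$. The plan is then to bootstrap this: I want to propagate the strict inequality through the repeated concatenation so that the relation survives when we compare the infinite strings $R(A)$ and $R(B)$.

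The hard part will be this propagation step — showing that a $<_T$ relation obtained at one ``level'' of the recursion persists all the way to the infinite repeating strings. Here Claim~2 is the essential tool: it guarantees that if $A<_T B$ then $AX<_T BY$ for arbitrary $X,Y$, so a truly-less relation is stable under appending on the right. The intended argument is an induction (or an infinite-descent style comparison) on the structure $B=AC$, peeling off copies of $A$ and tracking how the prefixes of $R(A)$ and $R(B)$ diverge; Claim~2 lets me extend the witnessed disagreement position forward indefinitely, so that the strict inequality witnessed at a finite prefix is exactly the strict inequality between the infinite strings $R(A)$ and $R(B)$. The main obstacle is handling the ``alignment'' carefully: because $A$ is a prefix of $B$, the two repeating strings agree on a long common prefix before diverging, and I must verify that the divergence the finite comparison $AB$ versus $BA$ detects is genuinely the same divergence that $R(A)$ versus $R(B)$ detects, rather than a spurious one that later copies might cancel. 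Once Claims~1--3 are invoked to pin down the first point of disagreement and its direction, and Proposition~\ref{prop:1} is used to rule out the degenerate equal-root case, the equivalence follows, and the remaining strict inequality $R(A)>R(B)\Leftrightarrow AB>BA$ is obtained by the stated symmetry.
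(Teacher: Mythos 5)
Your overall architecture is sound, and in one respect it genuinely improves on the paper's proof: you prove only the single implication $AB<BA \Rightarrow R(A)<R(B)$ and recover the converse by trichotomy, using Proposition~\ref{prop:1} to exclude equality and the symmetric form of the forward implication to exclude $AB>BA$. This is legitimate (both the pair $AB,BA$ and the pair $R(A),R(B)$ satisfy exactly one of $<,=,>$), and it spares you the paper's entire second case analysis, including the commutation argument ($ST\neq TS$ via Lemma~\ref{lemma:key-power-lemma}) that the paper needs only for that direction. Your non-prefix and equal-length cases of the forward implication are also correct.

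However, there is a genuine gap exactly where you flag the ``main obstacle'': the case where $A$ is a proper prefix of $B$. Writing $B=AC$, you correctly derive $AC<_T CA$, but the plan to finish ``by Claim~2'' does not go through as stated: Claim~2 transfers $U<_T V$ to extensions $UX<_T VY$, and to compare $R(A)$ with $R(B)$ you would need $R(A)$ (or its relevant suffix after cancellation) to extend the word $AC$ --- which it does not, since $R(A)=AAA\cdots$. That is precisely the alignment problem you name, and Claims~1--3 plus Proposition~\ref{prop:1} do not by themselves resolve it. The paper resolves it by peeling off the \emph{maximal} power, $B=A^mS$ with $m=\deg_B(A)$, reducing $AB<BA$ to $AS<_T SA$, and then splitting on whether $S$ is a prefix of $A$; in the delicate subcase $A=ST$ it applies Claim~2 in the specific form $AA=STST<_T SST=SA$, which produces two words that genuinely are prefixes of (the appropriate suffixes of) $R(A)$ and $R(B)$. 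Alternatively, your one-copy peeling can be rescued by making the induction measure explicit: induct on the position of the first disagreement between $AB$ and $BA$. When $B=AC$, the pair $(A,CA)$ satisfies $A(CA)<(CA)A$ with first disagreement exactly $|A|$ positions earlier (since $ACA$ and $CAA$ extend $AC$ and $CA$ respectively), and $R(B)=A\cdot R(CA)$, so the inductive hypothesis applied to $(A,CA)$ yields $R(A)<R(B)$. Without either the maximal-power decomposition or this decreasing measure --- which is also what supplies the termination your ``infinite descent'' needs, since naive repeated peeling just cycles through rotations of $B$ and ruling out an infinite cycle itself requires a Lemma~\ref{lemma:key-power-lemma}-type argument --- the central step of your proof remains unproved.
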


We prove the two directions separately in the following.

\begin{proof}[of $AB < BA \Rightarrow R(A) < R(B)$]

We discuss two subcases.

\medskip \noindent \emph{Subcase~1.} \emph {$|A|\leq|B|$}

\indent Let $B=A^mS$, in which $m=\deg_B(A)$ by Definition~\ref{def:deg}.

Note that target $R(A)<R(B)$ equals to $R(A)<R(A^mS)$, which then equals to $R(A)< SR(A^mS)$, by eliminating the leading $A^m$.

With $AB<BA$, we will get the relation that $AB<BA$ leads to $A^{m+1}S<A^mSA$. And $A^{m+1}S<A^mSA$ leads to $AS<SA$, which eventually leads to $AS<_TSA$.

We will prove that $AA<_TSA$. And since $AA$ is a prefix of $R(A)$ and $SA$ is a prefix of $SR(A^mS)$, proposition $R(A) < SR(A^mS)$ follows by Claim~2, proving the target proposition.

Now we prove $AA<_TSA$ in two cases.

1. If $|A|\leq|S|$, or $|A|>|S|$ but $S$ is not a prefix of $A$, note that $A$ is not a prefix of $S$, since $AS<SA$, proposition $A<_TS$ follows by Claim~1. Then $AA<_TSA$ follows by Claim~2.

2. If $|A|>|S|$ and $S$ is a prefix of $A$, let $A=ST$. Since $AS<_TSA$, we have $STS<_TSST$, then $AA = STST <_T  SST = SA$ follows by Claim~2. Thus $AA<_TSA$.\\

\medskip \noindent \emph{Subcase 2.} \emph {$|A|>|B|$}

Let $A=B^mS$, in which $m=\deg_A(B)$.

Note that target $R(A)<R(B)$ equals to $R(B^mS)<R(B)$, which then equals to $SR(B^mS)<R(B)$, by eliminating the leading $B^m$.

With AB<BA, we will get the relation that $AB<BA$ leads to $B^mSB<B^{m+1}S$. And $B^mSB<B^{m+1}S$ leads to $SB<BS$, which eventually leads to $SB<_TBS$.

We will prove that $SB<_TBB$. And since $BB$ is a prefix of $R(B)$ and $SB$ is a prefix of $SR(B^mS)$, proposition $SR(B^mS) < R(B)$ follows by Claim~2, proving the target proposition.

Now we prove $SB<_TBB$ in 2 cases.

1. If $|B|\leq|S|$ , or $|B|>|S|$ but $S$ is not a prefix of $B$, note that $B$ is not a prefix of $S$, since $SB<BS$, proposition $S<_TB$ follows by Claim~1. Then $SB<_TBB$ follows by Claim~2.

2. If $|B|>|S|$ and $S$ is a prefix of $B$, let $B=ST$. Since $SB<_TBS$, we have $SST<_TSTS$, then $SB = SST <_T  STST = BB$ follows by Claim~2. Thus $SB<_TBB$.

With both cases proved, we have $AB < BA \Rightarrow R(A) < R(B)$. \qed

\end{proof}

\newpage
\begin{proof}[of $R(A) < R(B) \Rightarrow AB<BA $]

In the following, we discuss two subcases.

\medskip \noindent \emph{Subcase~1.} \emph {$|A|\leq|B|$}.

\indent Let $B=A^mS$, in which $m=\deg_B(A)$.

Note that $AB<BA$ equals to $A^{m+1}S<A^mSA$, which equals to $AS<SA$ by eliminating the leading $A^m$.

With $R(A) < R(B)$, we will get the relation that $R(A) < R(B)$ equals to $R(A)<R(A^mS)$. And $R(A)<R(A^mS)$ equals to $R(A)<SR(A^mS)$ by eliminating the leading $A^m$.

Now we will prove $AS < SA$.

1. If $|A|<=|S|$, or $|A|>|S|$ and $S$ is not a prefix of $A$, note that $A$ is not a prefix of $S$,
since $R(A)<SR(A^mS)$, $A<_TS$ follows by Claim~1. Then $AS< SA$ follows by Claim~2.

2. If $|A|>|S|$ and $S$ is a prefix of $A$, let $A=ST$. Since $R(A)<SR(A^mS)$, we have $R(ST)<SR((ST)^mS)$, we pay attention to the prefixes with length 2*|S|+|T| of these two infinite words: $STS$ and $SST$. We argue that $STS\neq SST$ otherwise $ST=TS$, then $S,T,B,A$ are powers of a common element by Lemma~\ref{lemma:key-power-lemma}, then $R(A)=R(B)$, which is contradictory. Thus, since $R(ST)<SR((ST)^mS)$, we will have $STS<SST$. It holds that $AS=STS<SST=SA$. Thus, we end up with $AS<SA$.\\

\medskip \noindent \emph{Subcase 2.} \emph {$|A|>|B|$}.\\

Let $A=B^mS$, in which $m=\deg_A(B)$.

Note that $AB<BA$ equals to $B^mSB<B^{m+1}S$, which equals to $SB<BS$ by eliminating the leading $B^m$.

With $R(A) < R(B)$,  we will get the relation that $R(A) < R(B)$ equals to $R(B^mS)<R(B)$. And $R(B^mS)<R(B)$ equals to $SR(B^mS)<R(B)$ by eliminating the leading $B^m$.

Now we will prove $SB< BS$, in two cases.

1. If $|B|<=|S|$, or $|B|>|S|$ and $S$ is not a prefix of $B$, note that $B$ is not a prefix of $S$,
since $SR(B^mS)<R(B)$, $S<_TB$ follows by Claim~1. Then $SB< BS$ follows by Claim~2.

2. If $|B|>|S|$ and $S$ is a prefix of $B$, let $B=ST$. Since $SR(B^mS)<R(B)$, we have $SR((ST)^mS)<R(ST)$. we pay attention to the prefixes with length $2*|S|+|T|$ of these two infinite words: $SST$ and $STS$. We can argue that $SST\neq STS$ otherwise $ST=TS$, then $S,T,B,A$ are powers of a common element by Lemma~\ref{lemma:key-power-lemma}, then $R(A)=R(B)$, which is contradictory. Thus, since $SR((ST)^mS)<R(ST)$, we will have $SST<STS$. It holds that $SB=SST<STS=BS$. Thus, we end up with $SB<BS$.

With both cases proved, we have $R(A) < R(B) \Rightarrow AB < BA$.\qed
\end{proof}

Now, with both subcases proved, we have $R(A) < R(B) \Leftrightarrow AB < BA$.

\end{document}